\documentclass[sigconf,preprint,nonacm]{acmart}

\AtBeginDocument{%
  }

\usepackage[utf8]{inputenc} %
\usepackage{hyperref}       %
\usepackage{url}            %
\usepackage{booktabs}       %
\usepackage{nicefrac}       %
\usepackage{microtype}      %
\usepackage{algorithmic}
\usepackage{graphicx}
\usepackage{textcomp}
\usepackage{wrapfig}
\usepackage{xcolor}
\usepackage{pifont}%
\usepackage{tikz}
\usepackage{booktabs, makecell, multirow, tabularx}
\usepackage{todonotes}
\usepackage[normalem]{ulem}
\usepackage{adjustbox}
\usepackage{minted}
\usepackage{tcolorbox}
\usetikzlibrary{positioning, calc, arrows.meta, decorations.pathreplacing}
\usetikzlibrary{arrows.meta, positioning, calc}
\usepackage{cleveref}
\usetikzlibrary{cd}

\usepackage[dvipsnames]{xcolor}

\newcommand{\decompiler}[0]{\textsc{Manifold}}\usepackage{listings}
\usepackage{colortbl}
\usepackage{courier}
\usepackage{calc}

\lstdefinestyle{cstyle}{
    language=C,
    basicstyle=\ttfamily,
    keywordstyle=\color{blue}\bfseries,
    commentstyle=\color{gray}\itshape,
    stringstyle=\color{red},
    numbers=left,
    numberstyle=\tiny\color{gray},
    frame=none,
    breaklines=true,
    tabsize=4
}

\lstdefinelanguage{Rust}{
  keywords={typeof, new, true, false, catch, function, return, null, catch, switch, var, if, in, while, do, else, case, break, pub, enum, struct, fn, match, let, mut, impl, for, loop, as, crate, super, self},
  keywordstyle=\color{blue}\bfseries,
  ndkeywords={Arc, Vec, Z, ClightType, Tpointer, Tarray, Tfunction, ClightAttr, CallConv, Address},
  ndkeywordstyle=\color{purple}\bfseries,
  sensitive=true,
  comment=[l]{//},
  morecomment=[s]{/*}{*/},
  commentstyle=\color{gray}\ttfamily,
  stringstyle=\color{red}\ttfamily,
  morestring=[b]',
  morestring=[b]"
}

\begin{document}

\title{Superset Decompilation}

\author{Chang Liu}
\affiliation{%
  \institution{Syracuse University}
  \country{}}
\email{cliu57@syr.edu}

\author{Yihao Sun}
\affiliation{%
  \institution{Syracuse University}
  \country{}}
\email{ysun67@syr.edu}

\author{Thomas Gilray}
\affiliation{%
  \institution{Washington State University}
  \country{}}
\email{thomas.gilray@wsu.edu}

\author{Kristopher Micinski}
\affiliation{%
  \institution{Syracuse University}
  \country{}}
\email{kkmicins@syr.edu}

\begin{abstract}

Reverse engineering tools remain monolithic and imperative compared to the advancement of modern compiler architectures: analyses are tied to a single mutable representation, making them difficult to extend or refine, and forcing premature choices between soundness and precision. We observe that decompilation is the reverse of compilation and can be structured as a sequence of modular passes, each performing a granular and clearly defined interpretation of the binary at a progressively higher level of abstraction.

We formalize this as provenance-guided superset decompilation (PGSD), a framework that monotonically derives facts about the binary into a relation store. Instead of committing early to a single interpretation, the pipeline retains ambiguous interpretations as parallel candidates with provenance, deferring resolution until the final selection phase. \decompiler{} implements PGSD as a declarative reverse engineering framework that lifts Linux ELF binaries to C99 through a granular intermediate representation in ~35K lines of Rust and Datalog. On GNU coreutils, \decompiler{}'s output quality matches Ghidra, IDA Pro, angr, and RetDec on multiple metrics while producing fewer compiler errors, and generalizes across compilers and optimization levels.

\end{abstract}

\begin{CCSXML}
<ccs2012>
   <concept>
       <concept_id>10011007.10011074.10011111.10003465</concept_id>
       <concept_desc>Software and its engineering~Software reverse engineering</concept_desc>
       <concept_significance>500</concept_significance>
       </concept>
   <concept>
       <concept_id>10011007.10011006.10011041</concept_id>
       <concept_desc>Software and its engineering~Compilers</concept_desc>
       <concept_significance>300</concept_significance>
       </concept>
   <concept>
       <concept_id>10003752.10003790.10003795</concept_id>
       <concept_desc>Theory of computation~Constraint and logic programming</concept_desc>
       <concept_significance>500</concept_significance>
       </concept>
   <concept>
       <concept_id>10010147.10010178.10010187</concept_id>
       <concept_desc>Computing methodologies~Knowledge representation and reasoning</concept_desc>
       <concept_significance>500</concept_significance>
       </concept>
 </ccs2012>
\end{CCSXML}

\ccsdesc[500]{Software and its engineering~Software reverse engineering}
\ccsdesc[300]{Software and its engineering~Compilers}
\ccsdesc[500]{Theory of computation~Constraint and logic programming}
\ccsdesc[500]{Computing methodologies~Knowledge representation and reasoning}

\keywords{Reverse Engineering, Datalog}

\maketitle

\section{Introduction}

\begin{table*}[t]
\centering
\caption{Architectural comparison of decompilers. \decompiler{} is the first
decompiler that combines a declarative specification language, multi-level
intermediate representations, and a modular nano-pass architecture.
IDA Pro/Binary Ninja is closed-source and proprietary.}
\label{tab:decompiler-comparison}
\resizebox{\textwidth}{!}{%
\begin{tabular}{@{}lccccccc@{}}
\toprule
\textbf{Decompiler}
  & \textbf{Architecture}
  & \textbf{IR Approach}
  & \textbf{Specification}
  & \textbf{Extensibility}
  & \textbf{Language}
  & \textbf{Decompiler LOC}
  & \textbf{Open Source} \\
\midrule
Ghidra~\cite{ghidra}
  & Monolithic
  & Single (P-code)
  & Imperative
  & Java/Python plugins
  & C++ / Java
  & ${\sim}$300K
  & Yes \\
IDA~\cite{hexrays}
  & Monolithic
  & Single (microcode)
  & Imperative
  & C++/Python
  & C++
  & N/A (proprietary)
  & NO \\
RetDec~\cite{retdec}
  & Monolithic
  & Single (LLVM)
  & Imperative
  & LLVM passes (C++)
  & C++
  & ${\sim}$200K
  & Yes \\
angr~\cite{angr}
  & Monolithic
  & Multiple (VEX $\to$ AIL)
  & Imperative
  & Python
  & Python/C/C++
  & ${\sim}$150K$^\dagger$
  & Yes \\
Binary Ninja~\cite{binaryninja}
  & Monolithic
  & Multiple (BNIL)
  & Imperative
  & Python
  & C++
  & N/A (proprietary)
  & NO \\
\midrule
\textbf{\decompiler{}}
  & \textbf{Nano-pass}
  & \textbf{Multiple (CompCert)}
  & \textbf{Declarative}
  & \textbf{Datalog rules, Rust}
  & \textbf{Rust}
  & \textbf{${\sim}$35K}
  & \textbf{Yes} \\
\bottomrule
\end{tabular}%
}

{\footnotesize $^\dagger$Estimate, includes decompiler-relevant components.}
\end{table*}

Reverse engineering (RE) is a cornerstone of software security, program understanding, and related fields. In RE, an analyst performs an iterative, hypothesis-driven exploration of an (often binary) artifact to explicate its higher-level behavior~\cite{Votipka:2020}. This process interleaves a range of related tasks, both manual (\emph{e}.\emph{g}., reading the code and making comments, renaming variables based on application-specific intuition) and automated (\emph{e}.\emph{g}., decompilation, static analysis), often driven by an RE tool such as IDA Pro~\cite{ida} or Ghidra~\cite{ghidra}. 
Despite the capabilities of these tools, a profound architectural divide lies between the compiler and decompiler communities. While modern compilers have embraced modular, multi-pass architectures built around well-defined intermediate representations (IRs)~\cite{mlir,llvm,compcert}, decompilers and reverse engineering frameworks~\cite{ghidra, hexrays, binaryninja, angr, retdec, mcsema, sailr} remain fundamentally monolithic. These tools typically consist of a massive codebase ranging from 150K to over 1M lines of C++ or Java code operating over a single IR or a limited set of IRs ( Table~\ref{tab:decompiler-comparison}). Consequently, core tasks such as control-flow recovery, type reconstruction, and variable inference are tightly entangled in a single, mutable program representation. This lack of modularity makes extensibility a persistent challenge: integrating new analyses requires navigating deeply coupled code with little infrastructure for incremental development~\cite{dramko2024taxonomy}, and extending any single feature risks breaking others.

In this work, we argue that the result of decompilation, and RE broadly, should be a \emph{forest} of increasingly-higher-level candidate decompilations, derived via a tower of logic-defined rules. Just as compilers incrementally lower code through a chain of modular passes~\cite{nanopass, llvm,mlir,compcert}, we envision decompilation restructured as a sequence of IR lifting passes which perform analysis-directed translation via logical rules. However, unlike compilation, decompilation must cope with missing information and genuine ambiguity---even disassembly of stripped binaries is undecidable in general~\cite{Engel:24, ddisasm}. This creates a fundamental tension: sound, verified decompilation~\cite{Freek2002, EngelVerbeekRavindran23}, while theoretically appealing, demands formal semantics that rarely exist for production compilers or complex ISAs like x86-64. Meanwhile, reverse engineering is inherently exploratory: analysts routinely entertain counterfactual interpretations: ``what if this data region encodes executable code?'' (useful for deobfuscation) or ``what if control flow diverts mid-instruction?'' (useful for identifying ROP gadgets~\cite{Shacham:2007}). Existing monolithic tools force a premature commitment to a single interpretation, discarding alternatives that an analyst may later need to revisit. The result is a workflow where the tool's rigidity actively impedes the analyst's natural hypothesis-driven reasoning.

To address these dual requirements of modularity and exploratory flexibility, we introduce \emph{declarative decompilation}: a framework for specifying and implementing practical, extensible, and scalable decompilers using logic programming. In \S~\ref{sec:formal}, we formalize \emph{provenance-guided superset decompilation} (PGSD), our specific approach to logic-defined decompilation. As a proof of concept, we present \decompiler{}, the first declarative decompiler capable of lifting Linux ELF binaries to C99 by ascending through CompCert intermediate representations.
In our architectural design, each pass is either a self-contained Datalog program or an imperative module that only derives new facts. Both declare their input and output relations and operate over a shared, monotonic fact store — a central database where facts from all IR levels coexist. Because the store grows monotonically, passes compose naturally: no pass invalidates another's conclusions, and adding a new analysis amounts to writing a new pass without modifying existing ones. Prior work has demonstrated the effectiveness of Datalog for binary disassembly~\cite{ddisasm} and class hierarchy recovery~\cite{ooanalyzer}; \decompiler{} extends this declarative approach to the full reverse engineering pipeline.

In summary, this paper makes the following contributions:
\begin{itemize}
    \item A \textbf{declarative decompilation architecture} where individual nano-passes invert specific CompCert compiler transformations using Datalog inference rules (\S~\ref{sec:formal}, \S~\ref{sec:impl}).
    \item A \textbf{formal framework} for provenance-guided superset decompilation (PGSD). This framework models the IR as a graph and defines the semantics of analysis passes operating over a shared, monotonic relation store (\S~\ref{sec:formal}).
    \item The \textbf{\decompiler{} system}, an implementation that systematically lifts x86-64 ELF binaries to C through the CompCert IR stack. The system comprises approximately 35K lines of Rust and Datalog and successfully scales to general-purpose C code (\S~\ref{sec:impl}).
    \item An \textbf{empirical evaluation} showing that
      \decompiler{} matches established decompilers on
      coreutils in function recovery, type accuracy, and
      struct reconstruction, and generalizes across
      compilers and optimization levels
      (\S~\ref{sec:evaluation}).
\end{itemize}

\begin{figure*}[htbp]
    \centering
    \includegraphics[width=1\linewidth]{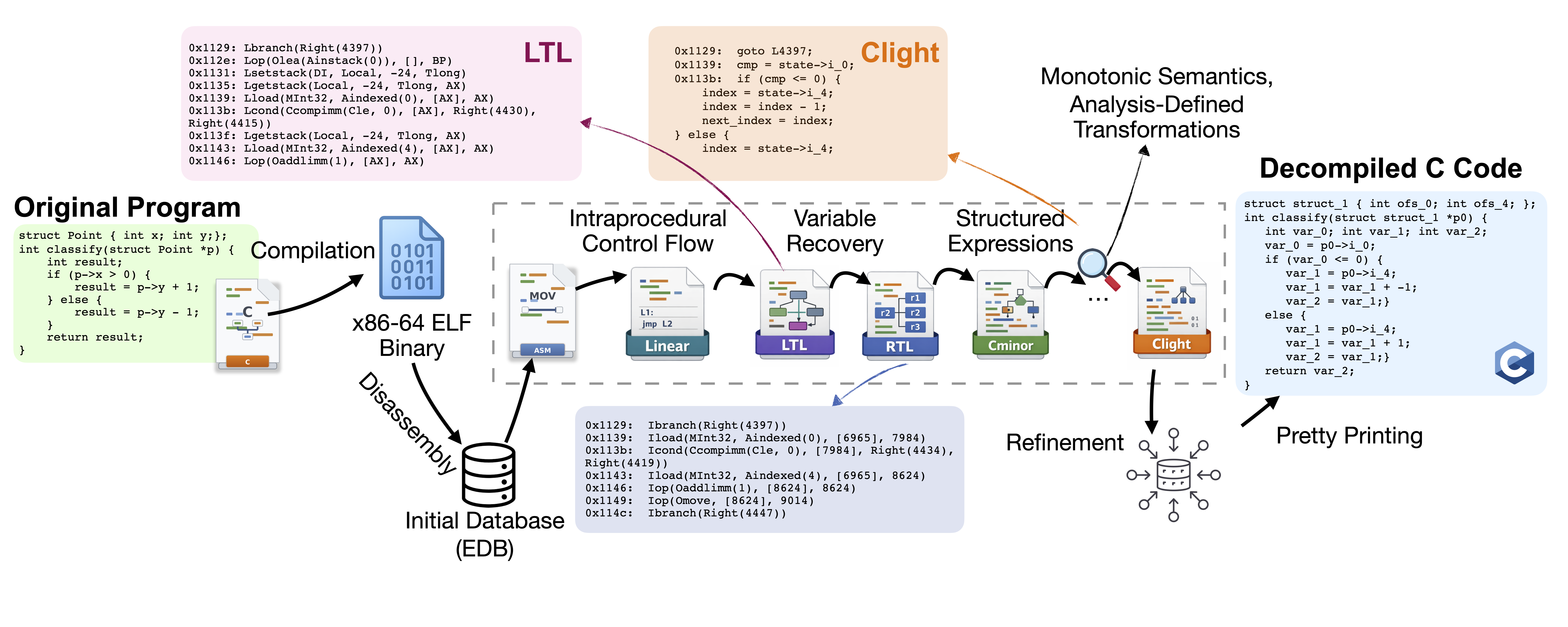}
    \caption{End-to-end overview of \decompiler{}.}
    \label{fig:e2e}
\end{figure*}

\section{Background}

\decompiler{} is written in a combination of Ascent Datalog (a Rust-embedded Datalog EDSL~\cite{ascent}) and Rust, but our formalism (\S~\ref{sec:formal}) is based on provenance semirings, a unifying extension of Datalog to track (and compute over) the provenance of each inferred fact. We begin by discussing the essential background related to this work.

\subsection{Datalog and Declarative Program Analysis}
\label{sub:datalog}
Datalog is a declarative language originally designed for deductive reasoning that has found broad application in program analysis and reverse engineering. A Datalog program comprises an extensional database (EDB) of base facts and an intensional database (IDB) of derived facts, deduced through Horn clauses of the form:
\[
\textsf{Head}(\ldots) \leftarrow \textsf{Body}_1(\ldots), \textsf{Body}_2(\ldots), \ldots, \textsf{Body}_n(\ldots)
\]
The head fact is derived when all body predicates are jointly satisfied, with shared variables acting as implicit relational joins. Rules are applied iteratively until a fixpoint is reached, naturally capturing the recursive, transitive reasoning that program analysis demands. 
Declarative frameworks building on this semantics have achieved state-of-the-art results from source-level pointer analysis~\cite{doop} to binary disassembly~\cite{ddisasm} and smart contract decompilation~\cite{decompilationsmartcontracts}. In binary analysis, the EDB is populated directly from program artifacts such as instructions and control-flow edges, and recursive rules propagate facts across the program. 

\[
\begin{array}{l}
\textsf{clight\_stmt}(n, \textsf{Sset}(d, \hat{e})) \leftarrow \\
  \quad \textsf{csh\_stmt}(n, \textsf{Sset}(d, e)), \\
  \quad \textsf{var\_type}(r, \tau)
    \;\text{for each}\; r \in \mathit{vars}(e), \\
  \quad \hat{e} = \textsf{typed\_expr}(e).
\end{array}
\]

However, traditional Datalog engines require facts to be encoded in flat, tabular form, which introduces friction when analyses manipulate the rich, nested data structures found in compiler IRs: recursive types such as \textsf{Tpointer(Arc<ClightType>)} cannot be directly expressed as Datalog terms. Ascent~\cite{ascent} removes this barrier with a ``Bring Your Own Data Structures'' (BYOD) feature that natively embeds Rust types into the engine, allowing relations over user-defined algebraic types and native Rust expressions within rule bodies. \decompiler{} relies on this capability throughout the pipeline. The code shown above demonstrates the rule that lifts a Csharpminor assignment $\textsf{Sset}(d, e)$ to Clight by resolving the types of all variables in $e$: for each variable $r$, the rule queries $\textsf{var\_type}(r, \tau)$ to obtain a type candidate; $\textsf{typed\_expr}$ then rewrites $e$ under that assignment to produce the fully typed Clight expression $\hat{e}$. Since a variable may admit multiple type candidates, distinct combinations yield distinct Clight statements, later resolved via a disambiguation phase~\S~\ref{sec:disambiguation}.

\subsection{Background: Provenance Semirings}

\begin{definition}[Commutative Semiring]
\label{def:semiring}
A \emph{commutative semiring} is a tuple
$(K,\oplus,\otimes,\bar{0},\bar{1})$ such that
$(K,\oplus,\bar{0})$ and $(K,\otimes,\bar{1})$ are commutative
monoids, $\otimes$ distributes over $\oplus$, and
$\bar{0}\otimes a = a\otimes\bar{0} = \bar{0}$ for all $a \in K$.
A semiring is \emph{zero-divisor-free} if
$a\otimes b = \bar{0}$ implies $a=\bar{0} \vee b=\bar{0}$.
\end{definition}

\begin{definition}[$K$-Relation]
\label{def:krel}
Let $U$ be a finite attribute set.  A \emph{$K$-relation} over $U$ is
a function:
\[
  R : \operatorname{Tuples}(U) \to K
\]
with finite support, i.e.,
$\{t \mid R(t)\neq \bar{0}\}$ is finite.  When $K=\mathbb{B}$, this
is just an ordinary finite relation.
\end{definition}

\begin{definition}[Provenance Polynomial Semiring]
\label{def:provpoly}
Let $X=\{x_1,x_2,\ldots\}$ be a countable set of
\emph{provenance tokens}.  The \emph{provenance polynomial semiring}
$\mathbb{N}[X]$ is the commutative semiring of multivariate
polynomials with coefficients in $\mathbb{N}$ and indeterminates
from~$X$, with $\oplus = +$, $\otimes = \times$,
$\bar{0}=0$, and $\bar{1}=1$.
\end{definition}

Concretely, a monomial $x_{i_1}\cdots x_{i_k}$ records one derivation path through base facts $x_{i_1},\ldots,x_{i_k}$, while $\oplus$ aggregates alternative derivations into a polynomial.  The semiring $\mathbb{N}[X]$ is zero-divisor-free and universal (Definition~\ref{def:provpoly}): for every commutative semiring $K$ and valuation $v:X\to K$, there is a unique homomorphism $h_v:\mathbb{N}[X]\to K$ extending~$v$ that commutes with Datalog evaluation.  This means a pipeline can evaluate once over $\mathbb{N}[X]$ and recover any coarser annotation by applying the appropriate homomorphism afterward. The remainder of this section builds on this foundation to formalize \decompiler{}'s candidate-tracking semantics.

\begin{proposition}[Universality]
\label{thm:universal}
For every commutative semiring $K$ and valuation $v:X\to K$, there is
a unique semiring homomorphism $h_v:\mathbb{N}[X]\to K$ extending~$v$.
Moreover, for any positive Datalog program $P$ and any
$\mathbb{N}[X]$-annotated input database~$I$,
\[
  {P}^{K}\!\bigl(h_v(I)\bigr)
  \;=\;
  h_v\!\bigl({P}^{\mathbb{N}[X]}(I)\bigr),
\]
where ${P}^{K}$ denotes the semantics of $P$ evaluated over~$K$.
\end{proposition}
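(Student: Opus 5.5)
The first half of Proposition~\ref{thm:universal} is the freeness of $\mathbb{N}[X]$ as a commutative semiring on $X$. I would prove it directly.

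\emph{Existence.} Define $h_v$ on monomials by $h_v(x_{i_1}^{a_1}\cdots x_{i_k}^{a_k}) = v(x_{i_1})^{a_1}\otimes\cdots\otimes v(x_{i_k})^{a_k}$. Extend it to polynomials by sending $\sum_m c_m\, m$ to $\bigoplus_m c_m\cdot h_v(m)$, where $c\cdot a$ denotes the $c$-fold $\oplus$-sum of $a$ (and $0\cdot a=\bar 0$). This is well defined because every polynomial has a unique normal form as an $\mathbb{N}$-combination of distinct monomials. Additivity holds immediately on normal forms. Multiplicativity follows by expanding a product of normal forms using distributivity, commutativity and associativity of $\otimes$ in $K$.

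\emph{Uniqueness.} Any homomorphism agreeing with $v$ on $X$ must send $1$ to $\bar 1$, preserve products of generators, and preserve finite sums. So it is forced on every normal form.

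For the second half I would first fix the semantics. $P^K$ is the least fixpoint of the immediate-consequence operator $T_P^K$. Each rule application joins body tuples, so it multiplies their annotations with $\otimes$. Alternative derivations of the same head tuple, and projections, add annotations with $\oplus$. The key lemma is \emph{one-step commutation}:
\[
  T_P^K\bigl(h_v(J)\bigr) = h_v\bigl(T_P^{\mathbb{N}[X]}(J)\bigr)
\]
for every $\mathbb{N}[X]$-database $J$, with $h_v$ applied pointwise to annotations. The head annotation is a finite $\oplus$ of $\otimes$-products of body annotations, and $h_v$ preserves both operations. The finite-support condition of Definition~\ref{def:krel} is preserved because $h_v(\bar 0)=\bar 0$. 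Iterating the lemma by induction on $n$ gives
\[
  (T_P^K)^n\bigl(h_v(I)\bigr)= h_v\bigl((T_P^{\mathbb{N}[X]})^n(I)\bigr)
\]
for all $n$. Here I take the stages to be cumulative, $J \mapsto I + T_P(J)$, starting from $\emptyset$.

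The main obstacle is passing from the finite stages to the fixpoint when $P$ is recursive. In $\mathbb{N}[X]$ a tuple with infinitely many derivation trees has no polynomial annotation, so the equation is only literally meaningful when $P^{\mathbb{N}[X]}(I)$ is reached after finitely many stages. This holds for nonrecursive programs, and more generally whenever every output tuple has finitely many derivation trees. In that case $P^{\mathbb{N}[X]}(I) = (T_P^{\mathbb{N}[X]})^N(I)$ for some $N$. The one-step lemma then makes the $K$-side stable at the same $N$, which yields the claim.

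For the general recursive case, my first attempt would follow Green, Karvounarakis and Tannen. I would replace $\mathbb{N}[X]$ by the formal power series semiring $\mathbb{N}^\infty[[X]]$ and require $K$ to be $\omega$-continuous. $P^K$ is then the supremum of the Kleene chain. The extension of $h_v$ to power series is $\omega$-continuous, since it is defined coefficientwise as a supremum of finite partial sums. Applying it to the chain of finite stages therefore commutes with the supremum, and the identity holds at the limit. I would state the proposition with this proviso. The paper's pipeline stratifies its passes and derives finitely many candidates per fact, so it falls under the finite case.
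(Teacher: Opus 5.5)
The paper gives no proof of this proposition. It is stated as background imported from the provenance-semiring literature (Green, Karvounarakis and Tannen), so there is no in-paper argument to compare with. Your proof is the standard one from that source, and it is correct:
\begin{itemize}
\item freeness of $\mathbb{N}[X]$ via unique normal forms;
\item the one-step commutation $T_P^K\circ h_v = h_v\circ T_P^{\mathbb{N}[X]}$, which needs only that $h_v$ preserves finite $\oplus$, finite $\otimes$ and $\bar 0$;
\item induction along the cumulative Kleene stages;
\item a separate limit step.
\end{itemize}

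Your caveat at the limit step is a genuine correction of the statement as printed. Take a recursive program such as transitive closure over a cyclic graph. The $\mathbb{N}[X]$ Kleene chain never stabilizes, and the coefficient sum of some annotation grows without bound. Any fixpoint of the cumulative operator would dominate every stage, so there is no fixpoint at all and $P^{\mathbb{N}[X]}(I)$ does not exist. The two repaired statements you give are the right ones:
\begin{itemize}
\item the $\mathbb{N}[X]$ version when every tuple has finitely many derivation trees over the support of $I$, in particular for nonrecursive programs;
\item the $\mathbb{N}^\infty[[X]]$ version for $\omega$-continuous $K$. There, uniqueness of $h_v$ must be read among $\omega$-continuous homomorphisms.
\end{itemize}

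One small tightening in the finite case. For a general commutative $K$ (a ring, say) the natural preorder is not antisymmetric, so ``least fixpoint of $T_P^K$'' has no meaning. Define $P^K$ instead as the eventual value of the Kleene chain, equivalently the finite $\oplus$ over derivation trees. That is exactly what your argument uses.

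The one claim I would drop is your closing remark that the paper's pipeline falls under the finite case. Finitely many candidate tuples does not imply finitely many derivation trees. Any reachability- or dataflow-style recursive rule over a CFG containing a loop derives some tuple along arbitrarily many unrollings of that loop, and stratification does not help inside a recursive stratum. So the pipeline is in the infinite case. The paper's own uses of $\mathbb{N}[X]$-valued least fixpoints inherit the same defect: the pipeline definition, and Finite Convergence, which holds for supports but not for annotations. Making them sound requires one of two options:
\begin{itemize}
\item $\mathbb{N}^\infty[[X]]$, which gives up finite convergence of annotations;
\item an absorptive semiring such as $\mathbb{B}$ or $\mathrm{PosBool}(X)$, where the Kleene chain does stabilize over a finite active domain. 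These are still zero-divisor-free and positive, so the Coverage argument survives.
\end{itemize}
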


In \decompiler{}, we evaluate each pass once over $\mathbb{N}[X]$, recording full derivation provenance, and recover coarser annotations such as simple derivability or candidate multiplicity by applying the appropriate homomorphism.

\section{Provenance-Guided Superset Decompilation}

\label{sec:formal}

Perhaps the single hardest problem confronted by decompilation is that compilation is not injective; register allocation maps many virtual-register assignments to the same machine locations, stack layout commits abstract variables to concrete frame offsets, and linearization replaces structured control flow with jumps. A single binary may therefore be consistent with many distinct higher-level programs. We introduce Provenance-Guided Superset Decompilation (PGSD), a practical approach that addresses this ambiguity by viewing decompilation as a structured search through a space of higher-level explanations of a binary. 
PGSD organizes decompilation around a poset of progressively higher-level IRs, using relational, declarative passes to lift facts from lower-to-higher representations. Each pass derives monotonically increasing knowledge about the binary, either by deriving auxiliary analysis facts or by performing analysis-directed transformations to a higher-level IR. Unlike in verified decompilation, we do \emph{not} assume access to a formal specification of the compiler (or various IRs) or its inverse image. Our reason for this is largely practical: formal specifications of production compilers (or assembly languages such as x86-64) are rarely available, and in practice, real binaries often contain code that lies outside of any one compiler's image (e.g., due to link-time optimization). PGSD is thus complementary to verified decompilation: rather than proving inversion of any particular compiler, PGSD enumerates a forest of candidate decompilations, each tracking explicit provenance for how the particular decompilation was performed.

Our Datalog-based approach represents decompilation as an increasing sequence of databases, building an \emph{annotated relation store} of candidate liftings. Passes compose naturally, so candidate generation and analyses share a common semantics. Figure~\ref{fig:e2e} illustrates the pipeline using a running example: a C program manipulating \textsf{struct Point\{int x; int y;\}} with a conditional over one field, compiled to a Linux ELF binary with GCC. The decompiler first disassembles the binary (currently via Capstone, though we also support the Datalog disassembler \textsf{ddisasm}) and encodes instructions, symbols, and ABI information as relational facts.

We leverage CompCert's IR hierarchy, whose low-level representations are close to assembly~\cite{compcert}. Consider \textsf{classify}, which takes a \textsf{Point*} \textsf{p} and returns \textsf{p->y+1} if positive, else \textsf{p->y-1}. During compilation, \textsf{p->y} lowers to \textsf{(p+4)}, then register allocation yields \textsf{movss 0x4(\%rax), \%xmm0}. Decompilation reverses these lowerings through successive abstraction layers: at Mach, raw instructions lift to typed operations, \textsf{mov \%rdi, -0x8(\%rbp)} becomes \textsf{Msetstack(DI, -8, Tany64)}; \textsf{mov (\%rax), \%eax} becomes \textsf{Mload(MInt32, Aindexed(0), [AX], AX)}. At Linear/LTL, raw stack offsets are replaced with typed local variable slots and SP-relative addressing is normalized to BP-based offsets, yielding \textsf{Lgetstack(Local, -8, Tlong, AX)}. At RTL, register allocation is reversed: hardware registers become pseudo-registers, producing \textsf{Iload(MFloat32, Aindexed(4), [p], float\_var)}. At Csharpminor, the offset becomes explicit pointer arithmetic: \textsf{Eload(MFloat32, Ebinop(Oaddl, Evar(p), Econst(4)))}. Finally, at Clight, type inference determines that \textsf{p} points to an \textsf{int} at offset 0 and \textsf{float} at offset 4, and struct recovery emits \textsf{p->ofs\_4} in place of raw pointer arithmetic.

This example also illustrates the precision/soundness tradeoffs and analysis dependencies during decompilation. Determining signedness requires inspecting downstream usage: the loaded value flows through \textsf{test \%eax, \%eax} and later a signed branch (\textsf{jle}), confirming signed \textsf{int}. Distinguishing struct fields from array elements requires combining struct recovery with type information; differently-typed values at offsets 0 and 4 rule out a uniform array. Overly conservative analysis reduces \textsf{p->ofs\_4} to \textsf{(float)(p+4)}: technically correct, but losing structural context.

\subsection{Semantic Domains for PGSD}
\label{sec:decompilation-domain}

\begin{definition}[IR Hierarchy]
\label{def:ir-hierarchy}
An \emph{IR hierarchy} is a finite partially ordered set
$(\mathcal{L}, \preceq)$ whose elements are intermediate
representations. The order relation $\preceq$ is interpreted so that
$\ell \preceq \ell'$ means that $\ell'$ is at least as high-level as
$\ell$, equivalently, that facts at level $\ell'$ may be derived from
facts at level $\ell$ by some sequence of passes.
\end{definition}
In our reversal of CompCert, the IR dependency graph is the totally-ordered chain mirroring CompCert's compilation stages:
\textsf{x86\text{-}64} $<$ \textsf{Asm} $<$ \textsf{Mach} $<$
\textsf{LTL} $<$ \textsf{RTL} $<$ \textsf{Cminor}  $<$ \textsf{Cshminor} $<$
\textsf{Clight}.  This chain should be understood as a
\emph{reference hierarchy of representations}: it organizes the forms of
facts produced by \decompiler{}, but does not imply that each pass is a
formally specified inverse of the corresponding CompCert pass.
A central subtlety is that higher-level candidates need not correspond
one-to-one with concrete instruction addresses.  We therefore separate
the identity of a program point from its representation at a particular level.

\begin{definition}[Nodes]
\label{def:nodes}
Let $N = \mathit{Addr} \cup \mathit{SynAddr}$
be the set of \emph{nodes}.  Elements of $\mathit{Addr}$ are concrete
machine addresses.  Elements of $\mathit{SynAddr}$ are synthetic nodes
introduced during lifting, such as control-flow join points or
regions lifted from no single address.
\end{definition}

\begin{definition}[Annotated Relation Store]
\label{def:store}
For each IR level $\ell\in\mathcal{L}$, let $\mathcal{R}_\ell = \{r_1:\tau_1,\ldots,r_{k_\ell}:\tau_{k_\ell}\}$
be a finite relation schema.  The \emph{global schema} is $\mathcal{R} = \bigcup_{\ell\in\mathcal{L}} \mathcal{R}_\ell$
together with any auxiliary analysis relations.
An \emph{annotated relation store} over $\mathcal{R}$ with
annotations in a semiring $K$ is a mapping:
\[
  D : r \mapsto D(r)
\]
such that each relation $r:\tau$ is interpreted as a $K$-relation:
\[
  D(r) : \operatorname{Tuples}(\tau)\to K .
\]

The value $D(r)(t)\in K$ is the annotation of tuple~$t$ in
relation~$r$.  When $K=\mathbb{N}[X]$, this annotation records all
derivation paths of~$t$.
\end{definition}

\begin{definition}[Statement Candidates]
\label{def:candidates}
For each IR level $\ell$ and node $n\in N$, the store $D$ induces a
set of \emph{statement candidates}
\[
  \mathit{Cand}_\ell(n)
  \;=\;
  \{(s,\kappa)\mid s\in S_\ell,\; \kappa = D(r_\ell)(n,s)\neq \bar{0}\},
\]
where $S_\ell$ is the set of statements at level $\ell$ and
$r_\ell$ is the principal statement relation for that level.
\end{definition}

A single node may admit multiple candidates, since one lower-level
artifact may support more than one higher-level interpretation.  The
annotation $\kappa$ records the evidence for each candidate.
For the rest of this section, we assume a fixed finite active domain of
constants and a fixed finite schema.  This is the setting induced by
one binary together with the finite number of synthetic nodes and analysis
facts generated during decompilation.  Under these assumptions, all
relations have finite support and every positive Datalog evaluation
terminates after finitely many tuple insertions.
The formal object manipulated by \decompiler{} is an annotated store of candidate facts at each IR level. We therefore formalize the decompiler directly as a candidate-generating pipeline over that store, rather than as an exact inverse semantics for any particular compiler.

\subsection{Passes, Pipeline, and Conditional Candidate Completeness}
\label{sec:passes}

We model the decompiler as a sequence of \emph{passes} which monotonically derive new facts into the global annotated store. Semantically, all passes are the same kind of object: each is a positive Datalog program, which derives new facts from existing ones. Passes differ only in \emph{which} relations they derive, and therefore what role they play in the pipeline: some passes derive new analysis information (e.g., identifying patterns for stack canaries), while others derive new higher-level IRs. 
\begin{definition}[Pass]
\label{def:pass}
A \emph{pass} is a tuple
$P = (\mathcal{I}, \mathcal{O}, \Delta)$ where:
\begin{itemize}
  \item $\mathcal{I} \subseteq \mathcal{R}$ is the set of input
    relation names;
  \item $\mathcal{O} \subseteq \mathcal{R}$ is the set of output
    relation names;
  \item $\Delta$ is a finite set of positive Datalog rules whose body
    atoms use relations in $\mathcal{I}$ and whose head uses
    relations in $\mathcal{O}$.
\end{itemize}
\end{definition}

Some passes derive auxiliary analysis information, while others derive principal candidate facts at higher IR levels. Semantically, however, both are the same kind of object: positive Datalog programs over the shared annotated store. 

\begin{definition}[Immediate Consequence Operator]
\label{def:tp}
Let $P=(\mathcal{I},\mathcal{O},\Delta)$ be a pass and let $D$ be an
annotated relation store.  The \emph{immediate consequence operator}
$T_P$ produces a new store by keeping all relations outside
$\mathcal{O}$ unchanged and, for each output relation $r\in\mathcal{O}$
and tuple $t$, setting:
\[
  T_P(D)(r)(t)
  \;=\;\hspace{-0.4cm}
  \bigoplus_{\substack{\rho \in \Delta \\
                       \mathrm{head}(\rho)=r(\vec{u})}}
  \hspace{-0.25cm}\;\;\bigoplus_{\theta:\,\theta(\vec{u})=t}
  \hspace{-0.15cm}\;\;\bigotimes_{a \in \mathrm{body}(\rho)}
  \hspace{-0.15cm}D(\mathrm{rel}(a))\bigl(\theta(\mathrm{args}(a))\bigr).
\]
Here $\mathrm{rel}(a)$ and $\mathrm{args}(a)$ extract the relation
symbol and argument tuple of the atom~$a$. 
Thus $\otimes$ combines evidence along a single derivation and
$\oplus$ aggregates alternative derivations.
\end{definition}

\begin{proposition}[Monotonicity]
\label{prop:monotone}
For every pass $P$, the operator $T_P$ is monotone with respect to the pointwise extension of annotations.
\end{proposition}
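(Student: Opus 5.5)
The plan is to fix the order first and then reduce everything to the monotonicity of the semiring operations, applied term by term to the formula in Definition~\ref{def:tp}. The ``pointwise extension of annotations'' needs an order on $K$. I would take the natural preorder, $a \sqsubseteq b$ iff there is $c$ with $a \oplus c = b$. For $\mathbb{N}[X]$ this is coefficientwise $\le$, and for $\mathbb{B}$ it is implication. Stores are then ordered by $D \sqsubseteq D'$ iff $D(r)(t) \sqsubseteq D'(r)(t)$ for every relation $r$ and tuple $t$. The goal is: if $D \sqsubseteq D'$ then $T_P(D) \sqsubseteq T_P(D')$.

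The first step is the basic algebraic lemma: $\oplus$ and $\otimes$ are monotone in each argument under $\sqsubseteq$.
\begin{itemize}
\item For $\oplus$: if $a \oplus c = b$, then $(a \oplus d) \oplus c = b \oplus d$, by commutativity and associativity.
\item For $\otimes$: if $a \oplus c = b$, then $(a \otimes d) \oplus (c \otimes d) = b \otimes d$, by distributivity.
\end{itemize}
Since $\sqsubseteq$ is reflexive and transitive, these extend by induction to finite $\bigoplus$ and $\bigotimes$: if $a_i \sqsubseteq b_i$ for each $i$, then $\bigoplus_i a_i \sqsubseteq \bigoplus_i b_i$ and $\bigotimes_i a_i \sqsubseteq \bigotimes_i b_i$. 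To do this cleanly, I would pair up the witnesses $c_i$ and regroup the resulting expression.

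Next I would verify that the sums in Definition~\ref{def:tp} are finite, so the lemma applies. The rule set $\Delta$ is finite, and each body is finite. The sum over substitutions $\theta$ is finite under the standing assumption of a finite active domain. Alternatively, all but finitely many terms are $\bar 0$ by the finite support of $K$-relations (Definition~\ref{def:krel}), because $\bar 0$ annihilates under $\otimes$.

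Then the main argument proceeds in two cases.
\begin{itemize}
\item For $r \notin \mathcal{O}$, $T_P(D)(r) = D(r) \sqsubseteq D'(r) = T_P(D')(r)$ directly.
\item For $r \in \mathcal{O}$ and a fixed tuple $t$, fix a rule $\rho$ and a substitution $\theta$. Each body factor satisfies $D(\mathrm{rel}(a))(\theta(\mathrm{args}(a))) \sqsubseteq D'(\mathrm{rel}(a))(\theta(\mathrm{args}(a)))$ by hypothesis. The lemma then gives the inequality for the $\bigotimes$, then for the inner $\bigoplus$ over $\theta$, then for the outer $\bigoplus$ over $\rho$. The index sets are the same on both sides, since they depend only on $\Delta$ and $t$, not on $D$.
\end{itemize}

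I expect the main obstacle to be definitional rather than computational. The paper does not say which order ``pointwise extension'' refers to, and $\sqsubseteq$ may fail to be antisymmetric in a general semiring. So I would state explicitly that the result holds for the natural preorder of any commutative semiring, and that this preorder is a partial order for $\mathbb{N}[X]$ and $\mathbb{B}$, which are the cases used in the paper. A second subtlety is the convention that $T_P$ keeps relations outside $\mathcal{O}$ and recomputes those inside $\mathcal{O}$. If $\mathcal{O} \cap \mathcal{I} \neq \emptyset$ (recursive rules), monotonicity still holds by the argument above, because it never needs $T_P(D) \sqsupseteq D$.
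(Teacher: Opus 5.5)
Your proposal is correct and takes the same route as the paper, whose proof notes only that positive rules use no negation and that $\oplus$ and $\otimes$ preserve the pointwise order, so enlarging input annotations can only enlarge derived ones. You fill in what the paper leaves implicit: you fix the natural order $a \sqsubseteq b \iff \exists c.\ a \oplus c = b$, derive the monotonicity of $\oplus$ and $\otimes$ from the semiring axioms, check that the sums are finite, and handle relations outside $\mathcal{O}$ separately.
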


\begin{proof}
Positive rules use no negation, and both $\oplus$ and $\otimes$ preserve pointwise extension of annotations. Hence, enlarging input annotations can only enlarge derived annotations.
\end{proof}

Lifting rules are \emph{recognizers}, not formal inverses. They express patterns that we treat as evidence for higher-level constructs.

\begin{definition}[Implemented Derivation Step]
\label{def:step}
Fix a pipeline of passes. An \emph{implemented derivation step} is an instance of a rule in some pass whose body facts are present in the store and whose head adds a new annotated fact. A \emph{derivation} for a fact $(r,t)$is a finite tree of implemented derivation steps rooted at $(r,t)$ and whose leaves are extensional input facts from the initial store $D_0$.
\end{definition}

\begin{definition}[Coverage Witness]
\label{def:witness}
Let $s$ be a candidate statement at IR level~$\ell$ and node $n$.
A \emph{coverage witness} for $(n,s)$ is a derivation, in the sense of Definition~\ref{def:step}, whose root is the fact $r_\ell(n,s)$.
\end{definition}

This notion is intentionally internal to the implemented pass library.
It does not claim that the witness corresponds to the true compilation
history of the binary; it claims only that the current library of rules
and auxiliary analyses derive the candidate.

\begin{definition}[Decompilation Pipeline]
\label{def:pipeline}
A \emph{decompilation pipeline}     is a sequence of passes $P_1, \ldots, P_m $ together with an initial annotated store $D_0$ produced by disassembly. The store evolves as 
\[
D_0 \xrightarrow{P_1} D_1 \xrightarrow{P_2} \cdots \xrightarrow{P_m} D_m
\]

where $D_j$ is the least fixed point reached by iterating the immediate consequence operator of pass $P_j$ starting from $D_{j-1}$. The initial store $D_0$ is extensional, as our implementation handles this by converting the disassembly into input tables. 

\end{definition}

\begin{proposition}[Finite Convergence]
\label{thm:convergence}    
Each pass in the pipeline converges after finitely many iterations.
\end{proposition}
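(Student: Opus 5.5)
The plan is to reduce convergence of each pass $P_j$ to a finiteness argument on the \emph{support} of the store, using the standing assumptions of \S\ref{sec:decompilation-domain}: a fixed finite active domain of constants and a fixed finite schema. Fix a pass $P=(\mathcal{I},\mathcal{O},\Delta)$ and a starting store $D_{j-1}$. Consider the iterates $E_0=D_{j-1}$ and $E_{k+1}=E_k\oplus T_P(E_k)$; by Proposition~\ref{prop:monotone} this sequence is increasing pointwise. The first step is to observe that the set of candidate tuples is finite. Every output relation $r\in\mathcal{O}$ has some fixed arity, and every argument ranges over the finite active domain. Hence the Herbrand base $B=\bigcup_{r\in\mathcal{O}}\operatorname{Tuples}(r)$ over the active domain is a finite set, say of size $|B|$.

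Next I would apply Proposition~\ref{thm:universal} with the valuation $v:X\to\mathbb{B}$ sending every token to $\mathsf{true}$. Write $\operatorname{supp}(E)$ for the set of tuples with nonzero annotation. Because $\mathbb{N}[X]$ is zero-divisor-free (and has no additive inverses), a sum or product of polynomials is nonzero exactly when the corresponding Boolean disjunction or conjunction is true. So $h_v$ commutes with each application of $T_P$, and $\operatorname{supp}(E_k)$ is precisely the $k$-th iterate of ordinary Boolean Datalog evaluation. That sequence of supports is an increasing chain of subsets of the finite set $B$ (together with the unchanged non-output relations). It therefore strictly grows at most $|B|$ times, and once it fails to grow it stays fixed. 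Consequently, after at most $|B|$ iterations no new tuple is ever inserted. This is exactly the sense of ``terminates after finitely many tuple insertions'' used in the paper, and it gives convergence of the set of candidate facts $\mathit{Cand}_\ell(n)$.

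The main obstacle is the annotations themselves. If $\Delta$ is recursive, a tuple can have infinitely many derivation trees, so its $\mathbb{N}[X]$ annotation can keep growing (coefficients and degrees increase) even after the support has stabilized. The least fixed point over $\mathbb{N}[X]$ then need not be reached finitely, and may not even exist without passing to power series. I would handle this in one of two ways, applied per pass.

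\emph{Non-recursive passes.} If the predicate dependency graph of $\Delta$ is acyclic, stratify the rules by depth. Each output relation's annotation is then fixed after a number of rounds bounded by the length of the longest dependency chain. This already covers the one-shot lifting rules.

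\emph{Recursive passes.} Here I would read ``converges'' as convergence in an absorptive quotient of $\mathbb{N}[X]$, such as $\mathrm{PosBool}(X)$ or $\mathrm{Why}(X)$, restricted to the finitely many tokens actually occurring in $D_0$. Over finitely many tokens this quotient is a finite semiring. Its pointwise extension to the finite set $B$ is a finite poset, and the monotone iteration is an increasing chain in that finite poset. Such a chain stabilizes after finitely many steps, bounded by $|B|$ times the height of the semiring.

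By Proposition~\ref{thm:universal}, the result of evaluating in this coarser semiring agrees with applying the coarsening homomorphism to the $\mathbb{N}[X]$ semantics. So nothing is lost beyond what the chosen coarsening forgets.

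The final step is to make this caveat explicit in the statement's proof: tuple-level convergence holds unconditionally under the finiteness assumptions, while annotation-level convergence is guaranteed for non-recursive passes or after projecting to a finite absorptive provenance semiring.
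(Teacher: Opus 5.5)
Your tuple-level argument is essentially the paper's own justification. The paper gives no separate proof of this proposition; it relies only on the remark in \S\ref{sec:decompilation-domain} that, with a fixed finite active domain and schema, every positive Datalog evaluation terminates after finitely many tuple insertions. Your finite Herbrand base, together with the Boolean image under $h_v$, is a faithful elaboration of that remark.

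Your caveat about annotations is also correct and goes beyond the paper. Take a transitive-closure pass with a self-loop $\mathsf{edge}(a,a)$ annotated $p$. The annotation $q$ of $\mathsf{reach}(a,a)$ must satisfy $q = p \oplus p \otimes q$, which has no solution in $\mathbb{N}[X]$ (compare degrees). So for recursive passes, both the proposition and the paper's definition of $D_j$ as a least fixed point over $\mathbb{N}[X]$ only make sense at the level of supports, or after passing to power series or a coarser semiring.

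The step that fails is your iteration $E_{k+1}=E_k\oplus T_P(E_k)$ over $\mathbb{N}[X]$. Inflationary iteration is harmless only when $\oplus$ is idempotent. In $\mathbb{N}[X]$ it re-adds every derivation already counted, and since the paper's $T_P$ copies non-output relations unchanged, it doubles those too. Take the single non-recursive rule $h\leftarrow b$ with $D_{j-1}(b)=x$ and $h$ initially empty. You get $E_k(b)=2^k x$ and $E_k(h)=(2^k-1)x$, and still $E_k(h)=kx$ if you apply $\oplus$ only on $\mathcal{O}$. So your claim that a non-recursive pass stabilizes within the length of its longest dependency chain is false for the iteration you wrote.

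Two parts of your argument are unaffected. The support argument survives, since supports ignore coefficients. Your $\mathrm{PosBool}(X)$/$\mathrm{Why}(X)$ variant also survives, because $\oplus$ is idempotent there; what you really use is finiteness plus idempotence ($\mathrm{Why}(X)$ is not absorptive).

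The fix is the standard Kleene iteration. Hold relations outside $\mathcal{O}$ at $D_{j-1}$, and set $E_{k+1}(r)=D_{j-1}(r)\oplus T_P(E_k)(r)$ for $r\in\mathcal{O}$. A non-recursive pass then stabilizes after as many rounds as its dependency depth, and supports still coincide with the Boolean iterates.

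Relatedly, in the recursive case your appeal to Proposition~\ref{thm:universal} against the $\mathbb{N}[X]$ semantics should be made over $\mathbb{N}^\infty[[X]]$ with $\omega$-continuous homomorphisms. You have just argued that the $\mathbb{N}[X]$ fixed point need not exist, and the paper's statement of that proposition tacitly makes the same assumption.
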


The main guarantee that the implementation supports is a \emph{coverage theorem} for the implemented rule set:

\begin{theorem}[Coverage of Implemented Derivations]
\label{thm:coverage}
Let $\langle P_1, \ldots, P_m\rangle$ be a decompilation pipeline with initial store~$D_0$, and let $D_m$ be the final store. If a candidate fact $(r,t)$ admits a coverage witness with leaves in $D_0$, then 
\[
D_m(r)(t) \neq \bar{0}
\]
In particular, if a statement candidate $(n,s)$ at level~$\ell$ admits a coverage witness, then $(s,\kappa)\in \mathit{Cand}_\ell(n)$ for some $\kappa\neq\bar{0}$
\end{theorem}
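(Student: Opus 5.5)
The proof goes by structural induction on the coverage witness, which by Definition~\ref{def:witness} is a finite derivation tree in the sense of Definition~\ref{def:step}. The invariant to establish is: \emph{for every node of the tree labelled with a fact $(r',t')$, once the pass $P_j$ whose rule produced that step has run to its fixpoint, $D_j(r')(t')\neq\bar{0}$, and this remains true in every later store, including $D_m$.}

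To make the induction well-founded in time as well as in tree height, I will use a simple ordering convention. Each internal step is an instance of a rule in a specific pass $P_j$, and its premises are either leaves or conclusions of steps in passes $P_i$ with $i\le j$. This is what ``present in the store'' in Definition~\ref{def:step} means for a sequential pipeline. So I induct on pairs (pass index, height), ordered lexicographically.

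\textbf{Step 1: leaves and persistence.} A leaf is an extensional fact with $D_0(r')(t')\neq\bar{0}$. I then need a persistence lemma: $D_{j-1}(r)(t)\neq\bar{0}$ implies $D_j(r)(t)\neq\bar{0}$. Relations outside $\mathcal{O}_j$ are unchanged by $T_{P_j}$ by Definition~\ref{def:tp}. For output relations, the least fixed point is reached by iteration starting from $D_{j-1}$, so the iterates form an increasing chain dominating $D_{j-1}$ by Proposition~\ref{prop:monotone}. The subtlety is the interpretation of $T_P$: as literally written, it overwrites output relations. I will take the fixpoint to be computed for the inflationary operator $D\mapsto D\oplus T_P(D)$. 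This matches the paper's monotone-store intent ("only derives new facts"). In a semiring that is positive, meaning $a\oplus b=\bar{0}$ implies $a=b=\bar{0}$ (as $\mathbb{N}[X]$ is), adding annotations never yields $\bar{0}$, so nonzero annotations persist. Finite convergence (Proposition~\ref{thm:convergence}) ensures that $D_j$ is well defined.

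\textbf{Step 2: the inductive step.} Take an internal step with rule $\rho\in\Delta_j$, substitution $\theta$, and premises $a_1,\dots,a_k$. By the induction hypothesis and persistence, each premise annotation $D(\mathrm{rel}(a_i))(\theta(\mathrm{args}(a_i)))$ is nonzero in every store of the iteration of $P_j$ from the point it appears. Each premise is either present already in $D_{j-1}$ or derived earlier within the same fixpoint, and we can use the stage at which all premises are present. Because $\mathbb{N}[X]$ is zero-divisor-free (Definition~\ref{def:semiring}), the product $\bigotimes_i$ of these annotations is nonzero. This product is one summand of the $\oplus$ in $T_{P_j}(D)(r')(t')$. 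Positivity of $\oplus$ then gives a nonzero value at that stage, and since $D_j$ is a fixpoint dominating all iterates, $D_j(r')(t')\neq\bar{0}$. Persistence carries this through to $D_m$.

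\textbf{Step 3: the corollary and the obstacle.} Applying the invariant at the root gives $D_m(r)(t)\neq\bar{0}$. For a statement candidate, $r=r_\ell$ and $t=(n,s)$, and Definition~\ref{def:candidates} immediately gives $(s,\kappa)\in\mathit{Cand}_\ell(n)$ with $\kappa=D_m(r_\ell)(n,s)$.

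I expect the main obstacle to be the algebraic side conditions rather than the induction itself. The argument needs both zero-divisor-freeness and positivity ($a\oplus b=\bar0\Rightarrow a=b=\bar0$), together with the accumulate rather than overwrite reading of $T_P$. I would state these explicitly for $K=\mathbb{N}[X]$, or for $\mathbb{B}$ via $h_v$ in Proposition~\ref{thm:universal}. For a general $K$ I would note that the theorem can fail, for example in a ring, where $\oplus$ can cancel.
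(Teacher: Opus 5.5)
Your core argument is the paper's: induction over the witness tree, with leaves nonzero in $D_0$, the product of premise annotations nonzero because $\mathbb{N}[X]$ is zero-divisor-free, and that product appearing as one summand of the head's annotation. Your additions are exactly what the paper's proof uses without stating:
\begin{itemize}
\item premises are available no later than the pass that consumes them (the paper's phrase ``at the point where the corresponding pass is evaluated'');
\item nonzero annotations survive later passes, even though $T_P$ in Definition~\ref{def:tp} literally overwrites its output relations;
\item going from ``one nonzero summand'' to ``nonzero head'' needs positivity of $\oplus$, which the paper never mentions.
\end{itemize}
All three observations are correct. So is your remark that the claim fails over a ring.

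The step that does not work is your particular repair of the overwrite issue. Addition in $\mathbb{N}[X]$ is cancellative and not idempotent. Hence $D\mapsto D\oplus T_P(D)$ has no fixed point as soon as the pass derives anything: $D = D\oplus T_P(D)$ forces the rule sum to be $\bar{0}$ on every output tuple, while the iterates keep re-adding the same derivations. For example, a rule $a(y)\leftarrow b(y)$ with $b(1)$ annotated $x_1$ gives $x_1, 2x_1, 3x_1,\ldots$ at $a(1)$. So neither the appeal to Proposition~\ref{thm:convergence} nor the claim that ``$D_j$ is a fixpoint dominating all iterates'' is available for your operator. In fact your own Step~2, run at such a fixpoint, produces exactly the nonzero summand that rules the fixpoint out. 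Over $\mathbb{B}$, where $\oplus=\vee$ is idempotent, your operator would be fine; the problem is specific to $\mathbb{N}[X]$.

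The standard fix adds the incoming store once, as a constant. For $r\in\mathcal{O}_j$ set $F_j(D)(r)(t) = D_{j-1}(r)(t)\oplus\bigoplus_{\rho,\theta}\bigotimes_{a} D(\mathrm{rel}(a))(\theta(\mathrm{args}(a)))$, and set $F_j(D)(r)=D_{j-1}(r)$ otherwise. Any fixed point $D_j$ of $F_j$ satisfies $D_j\ge D_{j-1}$ in the natural order, and with positivity this gives persistence. You can then run the inductive step directly at $D_j$: every premise is nonzero there, by the induction hypothesis plus persistence, so the witness's rule instance is a nonzero summand of $D_j(r')(t')$.

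Even with $F_j$, recursive passes need not stabilize in $\mathbb{N}[X]$. That is a weakness of Proposition~\ref{thm:convergence} itself, not of your argument, and your nonzeroness reasoning works unchanged in $\mathbb{N}^{\infty}[[X]]$. With this change, the rest of your proposal goes through as written.
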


\begin{proof}
Induction on the height of the witness derivation tree. For the base case: a leaf is an extensional fact in $D_0$, so its annotation is nonzero by construction. For the inductive step: consider a derivation node justified by some rule, $h \leftarrow b_1 , \ldots , b_k$. 
By the induction hypothesis, each body fact $b_i$ has a nonzero annotation in the store at the point where the corresponding pass is evaluated. The annotation contributed by this rule instance is:
\[
\bigotimes_{i=1}^{k}D(b_i)
\]
which is nonzero because $\mathbb{N}[X]$ is zero-divisor-free by Definition~\ref{def:provpoly}. Since this value is one summand in the $\oplus$ for the head $h$, the head receives a nonzero annotation. Repeating this argument up the derivation tree yields the claim.

\end{proof}

\section{Implementation}
\label{sec:impl}

The framework presented in \S~\ref{sec:formal} raises three practical questions: how ambiguity is concretely represented, how passes interact with each other and the relation store, and how this ambiguity is ultimately resolved. The implementation of \decompiler{} addresses these questions with concrete mechanisms: a superset IR representation that preserves provenance, a modular pass architecture that separates lifting from analysis while sharing a common relation store, and a Clang-guided disambiguation phase that selects a coherent C program from the accumulated candidates.

\paragraph{Superset Representation}
\label{sec:irdesign}

Adapting CompCert's IRs for reverse engineering requires several structural changes. In forward compilation, each IR carries invariants established by construction: \textsf{RTL} assumes single-definition pseudo-registers, \textsf{Mach} assumes a compiler-chosen stack layout, and \textsf{Clight} assumes a fully typed AST. Raw binaries satisfy none of these. \decompiler{} borrows CompCert's statement constructors—\textsf{Msetstack}, \textsf{Iop}, \textsf{Sassign}, \textsf{Sset}—but stores them as candidate tuples in a central relation store, \textsf{DecompileDB}, which maps string-keyed relation names to type-erased vectors of tuples. Type erasure decouples passes: each needs only agree on a relation name and tuple type, and the scheduler can compose or parallelize them freely. The store supports two complementary access patterns: Ascent-based decompile passes \emph{swap} entire relations out of the store, run Datalog to fixpoint, and swap the enlarged relations back without copying; analysis passes instead accumulate facts incrementally, appending individual tuples or replacing entire relations as needed.

Each IR level defines a principal statement relation (e.g., \textsf{mach\_inst} or \textsf{clight\_stmt}) and a separate edge relation (\emph{e}.\emph{g}., \textsf{clight\_succ}, etc.). A single node—represented uniformly as \textsf{u64}, with a reserved range for synthetic nodes introduced during lifting—may map to multiple statement tuples, encoding the candidate set of Definition~\ref{def:candidates}. The store also hosts auxiliary analysis relations: register mappings, stack slot descriptors, type-evidence candidates, struct layout hypotheses, and function signatures. All relations—statements, edges, and analysis facts—are subject to the same append-only discipline formalized in Proposition~\ref{prop:monotone}: higher-level passes add new facts without retracting lower-level ones. When inference is inconclusive, the pipeline retains all competing interpretations as parallel tuples, allowing candidates to propagate without premature commitment until the disambiguation phase constructs a single C program.

\paragraph*{Decompile Passes}

Following the design in \S~\ref{sec:passes}, passes fall into two categories. Decompile passes consume known IR facts and analysis results to advance the reverse engineering process, lifting from one CompCert IR level to the next while retaining candidates where the mapping is ambiguous. Analysis passes do not lift the IR directly but derive auxiliary information that subsequent decompile passes rely on to construct higher-level representations.

The pipeline begins by recognizing x86-64 mnemonics and emitting typed Mach constructors. For example, a \textsf{MOV} to a stack-relative address becomes \textsf{Msetstack}, a base-plus-displacement load becomes \textsf{Mload}, and auxiliary relations map platform register names to CompCert identifiers. The next pass normalizes stack accesses into typed slot descriptors. This step distinguishes between \textsf{Local}, \textsf{Incoming}, and \textsf{Outgoing} slots, and identifies callee-save spills for later suppression. Finally, the linear instruction stream is reorganized into a control-flow graph.

Recovering pseudo-registers from the fixed hardware allocation requires the most effort.  A union-find-based algorithm traces def-use chains across the CFG to
group physical-register uses into equivalence classes, each receiving a
fresh pseudo-register.  The Datalog rules then lift LTL instructions by
substitution.
Meanwhile, when x86 instructions deviate from CompCert's 
convention, such as \textsf{IDIV}, which writes both quotient and
remainder, individual rule emits distinct pseudo-register outputs; when decomposition requires multiple statements, a synthetic node preserves the single-statement-per-node invariant.
RTL operations are restructured into expression trees by arity-based dispatch: nullary to
\textsf{Econst}, moves to \textsf{Evar}, unary to \textsf{Eunop},
binary to \textsf{Ebinop}.  Two preparatory passes then bridge to
Clight: a Csharpminor conversion translates addressing modes into
explicit pointer arithmetic, and a structuring pass computes the dominator
trees to detect loops, if-then-else regions, and switch chains.
The final pass emits typed C-level statements by enumerating all feasible type assignment combinations, and each combination
produces a distinct Clight candidate.

\paragraph*{Analysis Passes}

Unlike decompile passes, which directly advance the pipeline toward higher-level representations, analysis passes enrich the reverse engineering process with auxiliary information that subsequent decompile passes consume. They may combine Ascent rules with imperative logic, and, like decompile passes, they write candidate results to the shared relation store when analysis cannot resolve to a single answer. For example, stack frame analysis computes def-use chains for stack variables by walking the CFG backward through stack-pointer adjustments. An RTL optimization phase iterates over copy propagation, dead store elimination, and variable live-range merge until a fixpoint is reached. Type inference combines opcode-driven emission, pointer-evidence accumulation, and constraint propagation seeded by external signatures. Struct recovery identifies candidates from multi-offset pointer dereference patterns and discards degenerate layouts. Function signature reconciliation merges definition-site and call-site evidence to determine parameter counts, types, and return types. 
 
The pass architecture extends naturally beyond CompCert's image. As a concrete example, CompCert lacks variable-length arrays, so we added a single analysis pass that recognizes VLA allocations at the assembly level—a dynamic \textsf{SUB} of \textsf{RSP} by a register operand, followed by a capture of the adjusted stack pointer—and emits a \textsf{Mbuiltin}(\textsf{alloca}, \textsf{size\_reg}) into the Mach-level store. No downstream pass requires modification: the existing built-in infrastructure carries \textsf{alloca} through every subsequent IR, surfacing at Clight as a call to \textsf{alloca}. This illustrates the general extension pattern: a new analysis pass derives facts into the shared store, and the pipeline consumes them through existing paths.

\paragraph*{Selecting a Representative}
\label{sec:disambiguation}

PGSD produces a forest of candidate decompilations: at each CFG node, multiple candidates may coexist. Extracting a single coherent C program from this forest requires resolving various interdependent choices (e.g., the type of a load at one node constrains which candidate statements are consistent at its use sites). Rather than encode the whole C language syntax check in Datalog, \decompiler{} uses Clang as a type-checking oracle. The search proceeds as a parallelized, error-directed greedy enumeration in which each function is assigned to an independent worker with no shared state. For each function, an initial configuration is constructed by selecting the first edge-consistent candidate at every CFG node and submitting it to Clang. Compiler errors drive subsequent refinements: error messages are parsed to identify the offending node and determine a directed fix---for instance, a ``pointer to integer conversion'' diagnostic steers the variable's declared type toward the pointer candidate. When no directed fix is available, the remaining candidates for that node are tried in turn. A replacement is accepted only if it strictly reduces the total error count. The search terminates when Clang reports zero errors or when the per-function step budget is exhausted, and the selected statements are combined with the structuring results into C code.

\section{Evaluation}
\label{sec:evaluation}

We evaluate \decompiler{} along three research questions:

\begin{itemize}
    \item \textbf{RQ1:} How does the quality of \decompiler{}'s C output compare to that of existing decompilers?
    \item \textbf{RQ2:} How effective is the candidate selection phase at resolving ambiguity?
    \item \textbf{RQ3:} How does \decompiler{}'s pipeline scale with binary size and complexity?
\end{itemize}

\subsection{Experiment Setup}

Our primary benchmark is GNU Coreutils 9.10, compiled with GCC 11.4.0 at \textsf{-O3}, targeting x86-64 Linux ELF binaries. Binaries are dynamically linked and not stripped. Of the full suite, we evaluate 101 programs, excluding utilities whose implementations share a single source file (e.g., \textsf{base64} and \textsf{base32} via \textsf{basenc.c}), which makes automated source-to-decompilation comparison infeasible. We supplement Coreutils with the Assemblage dataset~\cite{liu2024assemblageautomaticbinarydataset}, which provides a function-level source for real-world binaries beyond system utilities. We compare against IDA Pro 8.3~\cite{ida}, Ghidra 12.0.3~\cite{ghidra}, angr 9.2.162~\cite{angr}, and RetDec 5.0~\cite{retdec}, all with a 10-minute timeout per binary. All experiments run on a server with an AMD EPYC 7713P (128 logical cores) and 500 GB RAM under Ubuntu 22.04.

We evaluate along two dimensions. For output quality, we assess correctness: function recovery, signature and struct accuracy, and Clang front-end errors, along with CodeBLEU~\cite{codebleu} similarity to the original source, a metric that combines lexical overlap, syntax-tree matching, and dataflow comparison to capture code similarity. We also test generalization across compilers and optimization levels. For scalability, we profile CPU time and peak memory usage across binaries of varying sizes, as retaining multiple candidates throughout the pipeline incurs additional memory and computational overhead.

\subsection{Decompilation Output Quality}

\begin{figure}
    \centering
    \includegraphics[width=1\linewidth]{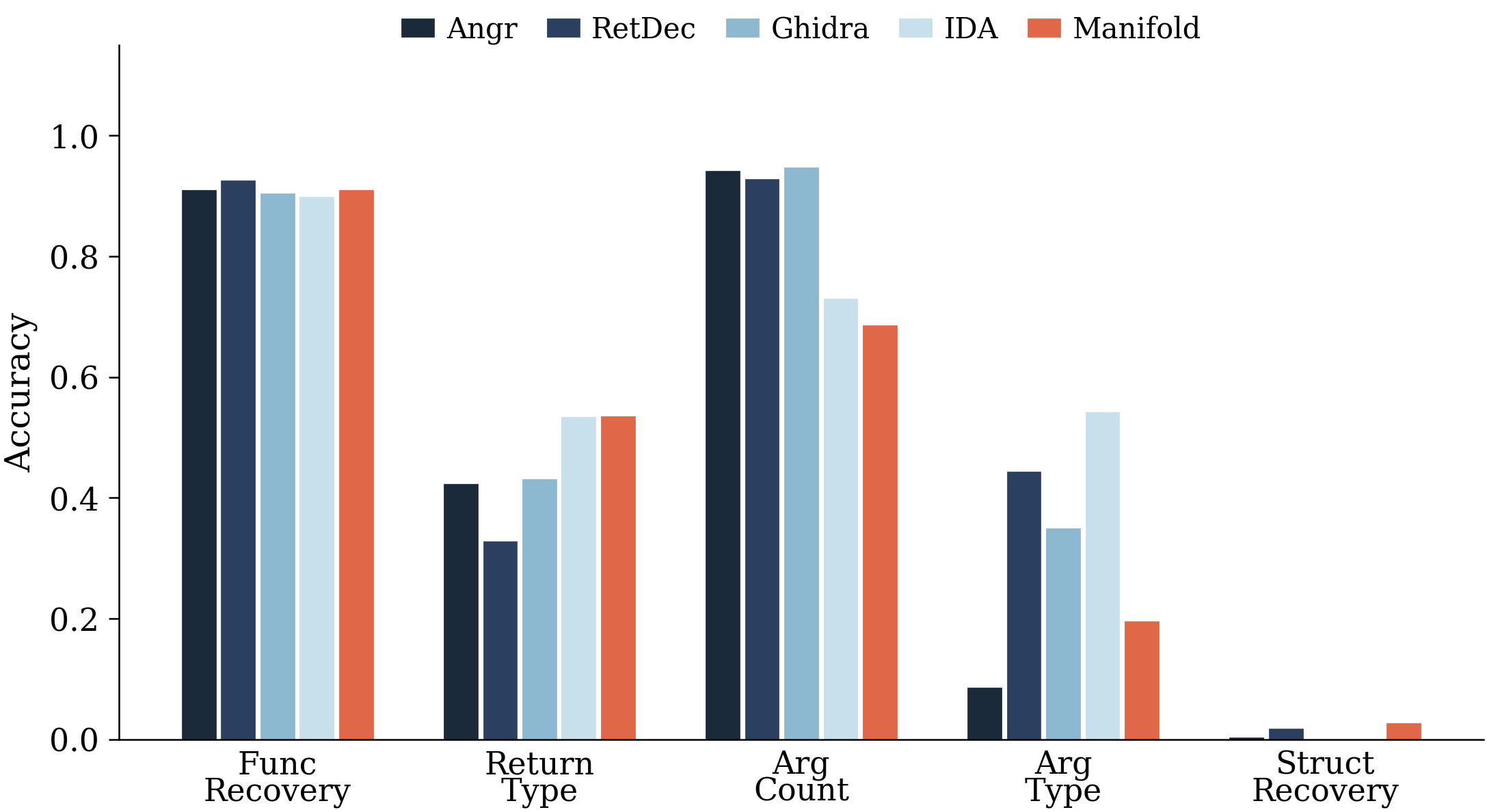}
    \caption{Function and struct-related statistics from decompilers. IDA and Ghidra do not recover struct by default. IDA types are sanitized to match C types. }
    \label{fig:hardmetrics}
\end{figure}

\begin{figure}
    \centering
    \includegraphics[width=1\linewidth]{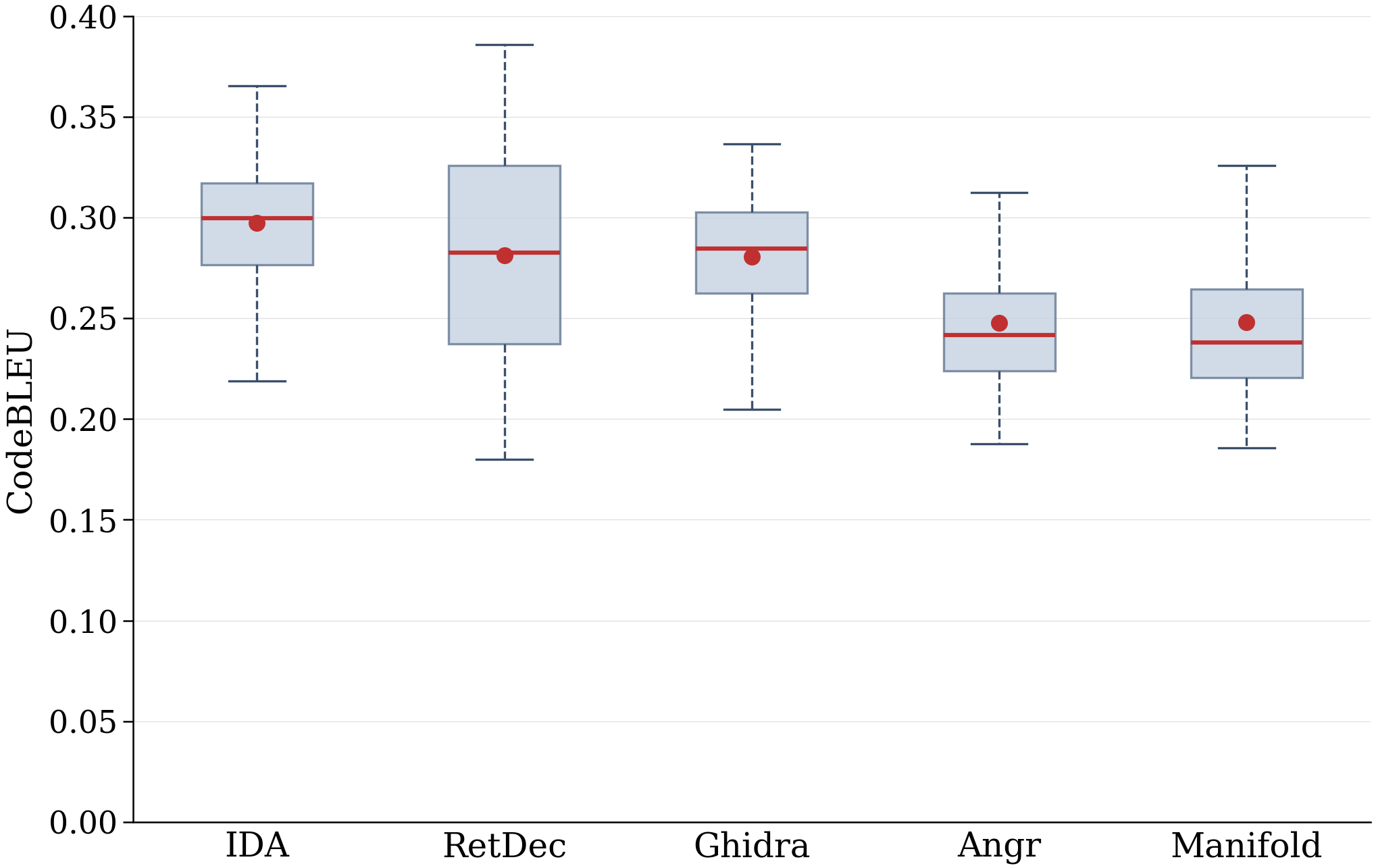}
    \caption{Per-binary CodeBLEU score distributions across decompilers.}
    \label{fig:codebleu-decompilers}
\end{figure}

\paragraph*{Structural Correctness.}

To address \textbf{RQ1}, we assess the quality of decompiled output along two complementary dimensions. Since the binaries are not stripped, all decompilers benefit from symbol table information such as function names and boundaries. For tools like IDA and Ghidra, this advantage extends further: their heuristics leverage symbol metadata to propagate variable names, resolve library signatures, and seed type recovery. In contrast, \decompiler{}'s pipeline consumes only disassembly and functions produced by Capstone, and all subsequent lifting is driven by the declarative passes themselves. This unstripped setting therefore provides a stronger baseline advantage to the conventional tools than \decompiler{}, making their function identification, argument count, type and return value more accurate. With this context, we first measure targeted correctness: whether functions, types, and structs are recovered accurately according to the following metrics:

\begin{itemize}
  \item \textbf{Function recovery:} matched to source first by exact
    name, then by a coarse signature fingerprint matching of argument count and
    per-argument type class
  \item \textbf{Return type:} match after normalization that
    resolves aliases (e.g., \textsf{size\_t} to \textsf{unsigned
    long}), strips qualifiers, and maps decompiler-generated struct
    names through a struct alias map.
  \item \textbf{Argument count and type:} exact match of argument count, and
    types are compared positionally after the same normalization in the previous step
  \item \textbf{Struct recovery:} matched by layout: field count and per-field type
\end{itemize}

Figure~\ref{fig:hardmetrics} summarizes these metrics across all five decompilers. On function recovery, all tools perform comparably at approximately 0.9 accuracy, reflecting the benefit of unstripped symbol tables. Return type accuracy shows more variation: \decompiler{} and IDA both achieve 0.54, while angr trails at 0.33. On argument count, \decompiler{} falls behind the more mature tools: IDA, Ghidra, and RetDec each reach approximately 0.94, whereas \decompiler{} achieves 0.69. Argument type accuracy is low across the board: IDA leads at 0.54, RetDec and Ghidra follow, \decompiler{} falls behind at 0.2, and angr trails at 0.1, the latter largely because angr defaults to coarse types such as \textsf{unsigned long} that inflate coverage at the expense of precision. Struct recovery remains challenging for all tools: \decompiler{} achieves 0.03, RetDec achieves 0.02, and IDA and Ghidra do not recover structs by default. The gap in argument and type accuracy is largely attributable to signature coverage: IDA and Ghidra ship with extensive type libraries spanning thousands of standard and platform-specific functions, whereas \decompiler{} relies on a manually curated signature file supplemented by ABI-level inference. Note that IDA's reported types are sanitized to match standard C type names before comparison.

Beyond per-feature accuracy, we evaluate how faithfully each decompiler preserves the structural properties of the original program. Table~\ref{tab:structural-metrics} reports five additional metrics: control-flow complexity (CC Ratio), nesting depth (Depth Ratio), code volume (Statement Ratio), inter-procedural call structure (CG F1), and surface-level control-flow construct usage (CF Sim). A ratio of 1.0 indicates perfect alignment with the original source.

\decompiler{} achieves the best CC Ratio (1.32) and Depth Ratio (1.20), compared to 1.44–1.60 and 1.50–1.67 for the other tools, indicating that its output most closely preserves the original program's decision-point count and nesting structure. It also leads on CG F1 at 0.359, reflecting accurate recovery of inter-procedural call edges. However, \decompiler{} exhibits the highest Statement Ratio (3.97 vs.\ 2.76–3.58) and the lowest CF Sim (0.625 vs.\ 0.807). The elevated statement count stems from two sources: the lack of expression folding, where operations that mature decompilers collapse into a single compound expression are instead emitted as separate assignments; and the structuring pass's use of \textsf{goto}+\textsf{if} encodings for loops, which inflates the count relative to a single \textsf{while} construct. The low CF Sim follows directly from the same choice: CF Sim compares control-flow keyword vectors without recognizing semantic equivalence, so a \textsf{goto}-based loop is penalized even when behaviorally identical to a \textsf{while} loop. IDA and Ghidra tie for the highest CF Sim at 0.807, consistent with their mature pattern-matching heuristics for loop and switch recovery.

\begin{table}[t]
\centering
\caption{Structural similarity metrics across decompilers, averaged over coreutils. CC, Depth, measure cyclomatic complexity, max nesting depth. CG F1 is the call-graph edge F1 score. CF Sim denotes the cosine similarity between control-flow construct vectors.}
\label{tab:structural-metrics}
\begin{tabular}{@{}lccccc@{}}
\toprule
\textbf{Decompiler} & \textbf{\makecell{CC\\Ratio}} & \textbf{\makecell{Depth\\Ratio}} & \textbf{\makecell{Stmt\\Ratio}} & \textbf{CG F1} & \textbf{CF Sim} \\
\midrule
Ghidra         & 1.60 & 1.67 & 3.00 & 0.337 & \textbf{0.807} \\
RetDec         & 1.50 & 1.50 & 3.58 & 0.331 & 0.705 \\
angr           & 1.44 & 1.50 & \textbf{2.76} & 0.344 & 0.805 \\
IDA            & 1.51 & 1.50 & 3.00 & 0.344 & \textbf{0.807} \\
\decompiler{}  & \textbf{1.32} & \textbf{1.20} & 3.97 & \textbf{0.359} & 0.625 \\
\bottomrule
\end{tabular}
\end{table}

\paragraph*{Source-Level Similarity.}

To complement the structural correctness assessment for \textbf{RQ1}, we report CodeBLEU~\cite{codebleu} scores to capture overall syntactic and semantic similarity between the decompiled code and the original source, as well as per-feature metrics. CodeBLEU averages are shown in Figure~\ref{fig:codebleu-decompilers} for coreutils and Table~\ref{tab:cldebleu-assemblage} for the Assemblage dataset. Since Assemblage provides only function-level source rather than complete source code, we use it exclusively for source-similarity scoring. We randomly selected 1000 binaries, each containing at least 10 functions, and computed the average CodeBLEU score for each function associated with these binaries. 

Figure~\ref{fig:codebleu-decompilers} and Table~\ref{tab:cldebleu-assemblage} show that the tools cluster into three tiers. IDA leads on both Coreutils (0.30) and Assemblage (0.28). Then, Ghidra and RetDec form a middle group, though RetDec drops from 0.28 on Coreutils to 0.25 on Assemblage. \decompiler{} and angr share the third tier with no significant difference: angr scores 0.24 on both datasets, while \decompiler{} scores 0.25 on Coreutils and 0.22 on Assemblage. Part of this gap is attributable to naming rather than structural quality: \decompiler{} derives variable names from register assignments, producing synthetic identifiers that widen the lexical distance beyond what structural differences alone would suggest. Meanwhile, \decompiler{} performs best on compact, well-structured utilities where structural fidelity dominates — scoring 0.36 on \textsf{hostid} and 0.34 on \textsf{whoami}, on par with IDA — and weakest on larger utilities with heavy library usage where name and variable recovery matter more, dropping to 0.20 on both \textsf{ls} and \textsf{sort} while IDA maintains 0.25 and 0.23 respectively (Table~\ref{tab:codebleu-selected}).

\begin{table}[]
\caption{Function level CodeBLEU score on Assemblage dataset.}
\label{tab:cldebleu-assemblage}
\begin{tabular}{lccccc}
\toprule
\textbf{Decompiler} & \textbf{\decompiler{}}        & \textbf{angr}            & \textbf{ghidra}          & \textbf{retdec}          & \textbf{ida}       \\  \midrule
CodeBLEU            & 0.22 & 0.24 & 0.27 & 0.25 & 0.28 \\ \bottomrule
\end{tabular}
\end{table}

\begin{table}[t]
    \centering
    \caption{CodeBLEU scores for coreutils binaries across decompilers.}  
    \label{tab:codebleu-selected}
    \begin{tabular}{l ccccc}   
      \toprule    
      \textbf{Binary} & \textbf{IDA} & \textbf{Ghidra} & \textbf{RetDec} & \textbf{angr} & \textbf{\decompiler{}} \\
      \midrule    
      \textsf{hostid} & 0.36 & 0.29 & 0.38 & 0.29 & 0.36 \\       
      \textsf{whoami} & 0.36 & 0.32 & \textbf{0.39} & 0.30 & 0.34 \\  
      \textsf{ls}     & \textbf{0.25} & 0.23 & 0.21 & 0.23 & 0.20 \\  
      \textsf{sort}   & \textbf{0.23} & 0.22 & 0.19 & 0.22 & 0.20 \\  
      \midrule    
      Average         & \textbf{0.30} & 0.27 & 0.28 & 0.26 & 0.28 \\  
      \bottomrule 
    \end{tabular}
\end{table} 

\begin{figure}
    \centering
    \includegraphics[width=1\linewidth]{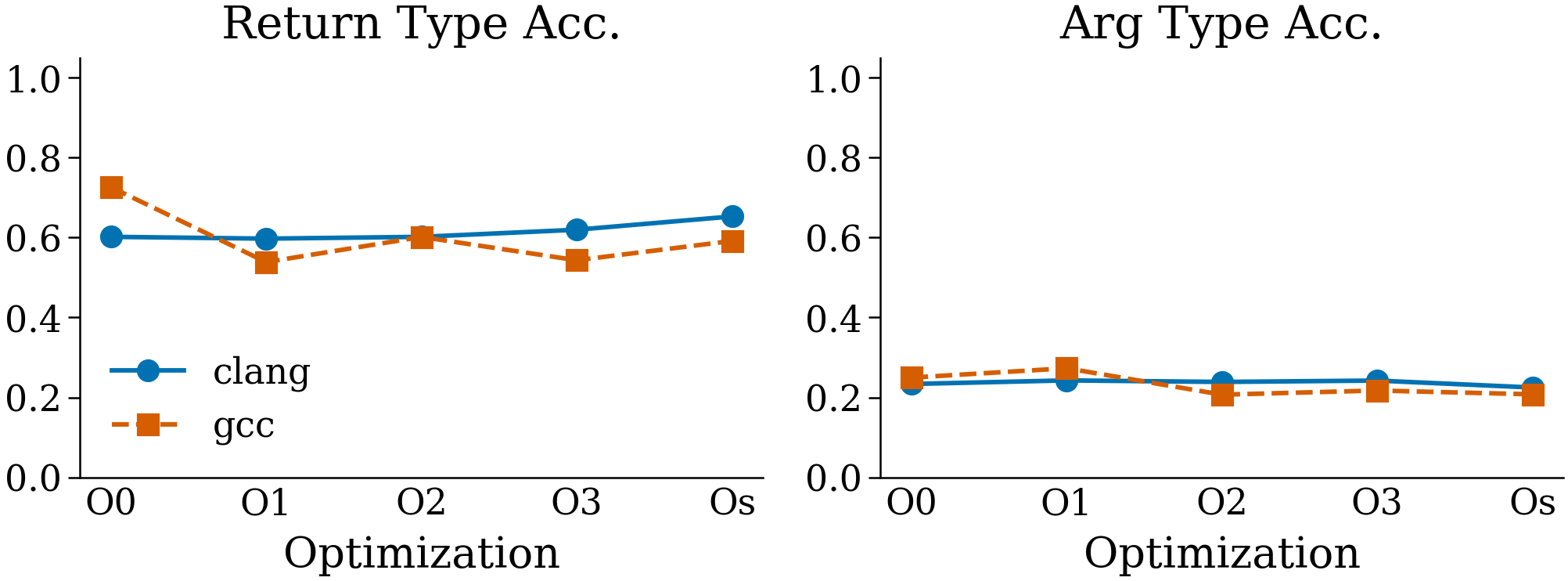}
    \caption{Decompiler metrics on different compilers/optimization levels}
    \label{fig:compilerdiff}
\end{figure}

\paragraph*{Cross-Compiler Robustness.} To assess whether \decompiler{}'s CompCert-derived pipeline generalizes beyond a single compiler, we evaluate on the same coreutils suite compiled with both GCC and Clang under five optimization levels each (\textsf{-O0} through \textsf{-O3} and \textsf{-Os}). Decompilation of 10 configuration binaries was completed successfully without timeouts. We observe stability across compilers and optimizations: the mean CodeBLEU varies by only 0.04 across all ten compiler–optimization combinations.
 
Per-binary variability is similarly low, indicating that output quality is dominated by the inherent complexity of each utility rather than by the compiler or optimization level, shown in Figure~\ref{fig:compilerdiff}. This suggests that \decompiler{}'s rule-based lifting captures general properties of the x86-64 compilation process—stack discipline, calling conventions, common instruction selection patterns—rather than compiler-specific idioms. The result is practically significant: analysts rarely know which compiler produced a given binary, and a decompiler whose output quality is sensitive to that choice would impose an additional burden on the reverse engineering workflow.
           
\paragraph*{Syntactic Validity.}
Although the ultimate goal of a decompiler is typically to
recover program behavior rather than produce strictly compilable C code~\cite{Cifuentes1994ReverseCT,281430,dream}, assessing
recompilability provides a measurement of syntactic and
semantic correctness. To address \textbf{RQ2}, we pass each
decompiler's output through the clang-sys crate and aggregate the resulting diagnostics across all coreutils binaries
(Figure~\ref{fig:clangerrs}).
Figure~\ref{fig:candidates_node} shows the distribution of candidate statements per node in \decompiler{}'s Clight IR. Roughly 60\% of nodes carry one candidate. In many cases this is simply because the lifting is unambiguous: a single x86 instruction maps to exactly one Clight statement with no type or structural alternatives. 

While the pipeline generates candidate divergence from several sources, such as type inference, signature reconciliation, and control-flow structuring, several factors ensure most ambiguity is resolved prior to statement emission. First, total soundness, such as allowing all type candidates, would lead to a memory explosion. In practice, type inference and refinement passes (\S~\ref{sec:passes}) propagate constraints that narrow down most register types to a precise set of candidates. Second, the interprocedural signature pass converges on a single signature per function, eliminating call node divergence. Third, the structuring pass enforces a deterministic CFG decomposition, collapsing multiple body nodes into compound statements (e.g., \textsf{Sifthenelse}, \textsf{Sloop}) under a single representative node. Lastly, synthetic nodes introduced during lifting correspond to exactly one statement by construction. The remaining ~40\% of nodes with multiple candidates reflect genuine residual statements that we are unable to resolve during reverse engineering, and these are subsequently resolved during the selection phase \S~\ref{sec:disambiguation}.

\decompiler{} produces the fewest total Clang errors, substantially fewer than Ghidra and IDA. The dominant error categories across all tools are undeclared identifiers and missing type or declaration errors, reflecting gaps in external header recovery rather than fundamental structural defects. \decompiler{}'s low error count follows directly from the Clang-guided selection phase (\S~\ref{sec:disambiguation}): because candidate selection is driven by Clang's own type checker, the final output is largely validated before emission. The residual errors fall into two categories: missing external declarations for library functions whose signatures are not captured by the current ABI model and function signature analysis, and imprecisions in individual analysis passes (e.g., overly coarse type evidence or incomplete expression folding). Both are addressable within the existing modular architecture by refining or adding passes.

\subsection{Scalability, Runtime, and Memory Usage}

\begin{figure}
    \centering
    \includegraphics[width=1\linewidth]{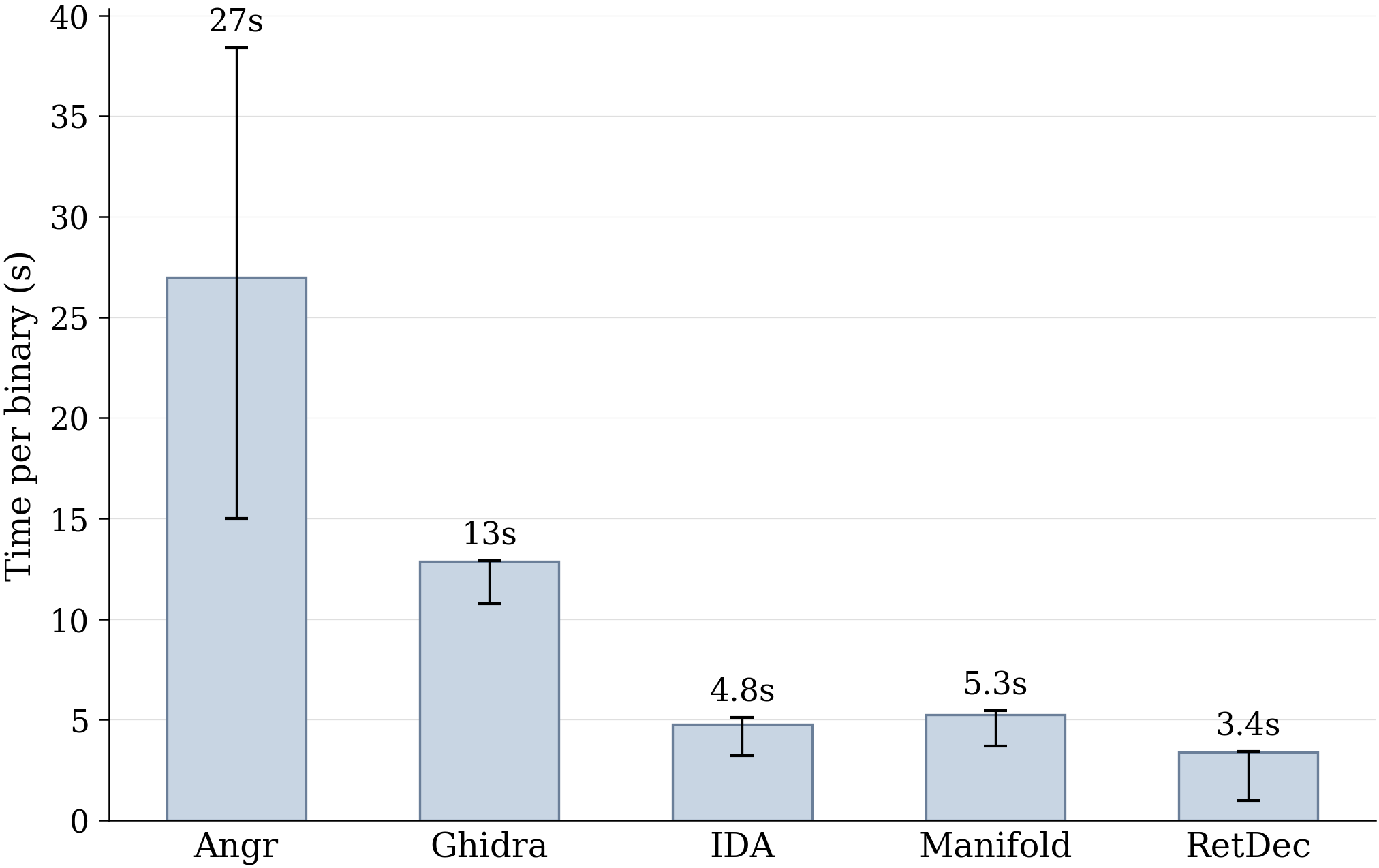}
    \caption{Running time for decompilers. \decompiler{} parallelizes via Rayon across all available cores; angr parallelizes per function across all available cores.}
    \label{fig:decompilertime}
\end{figure}

\begin{figure}
    \centering
    \includegraphics[width=1\linewidth]{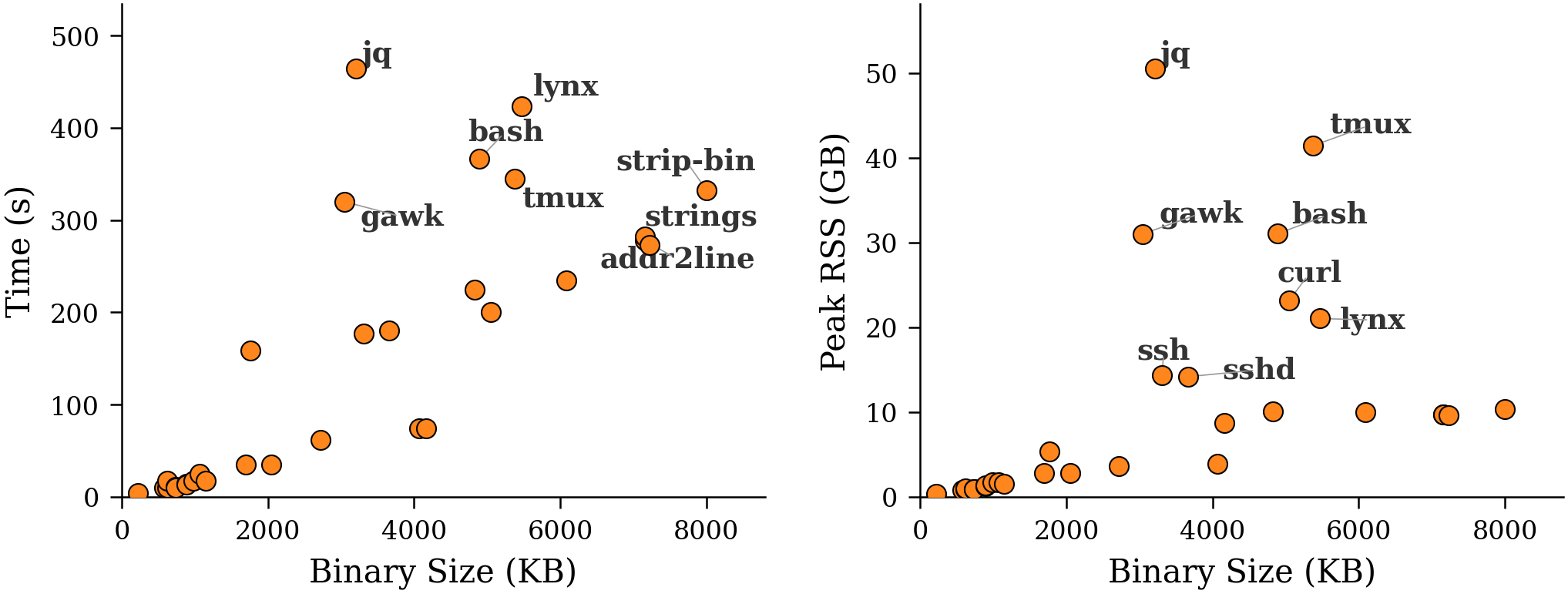}
    \caption{Running time and max memory consumption of binaries varying in sizes.}
    \label{fig:runningtime}
\end{figure}

\begin{figure}
    \centering
    \includegraphics[width=1\linewidth]{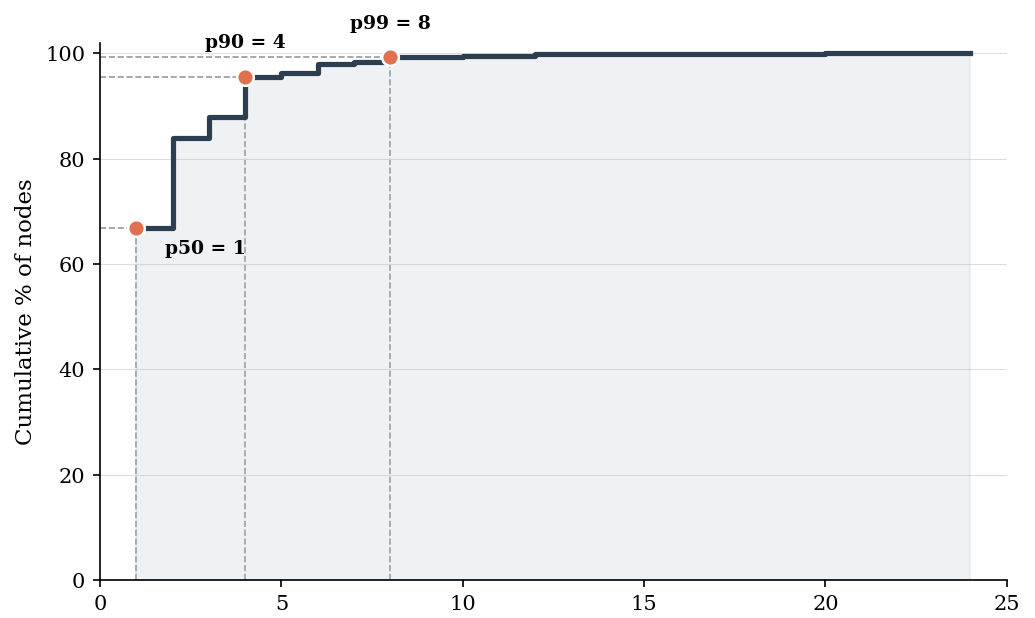}
    \caption{Candidate statement amount.}
    \label{fig:candidates_node}
\end{figure}

\begin{figure}
    \centering
    \includegraphics[width=0.9\linewidth]{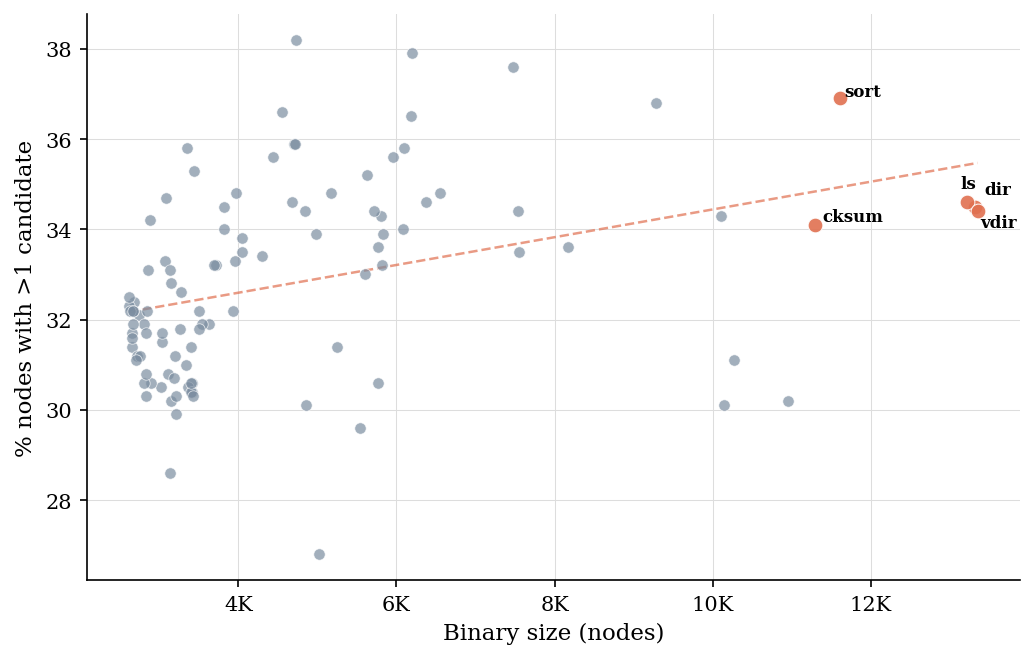}
    \caption{Candidate statement versus binary node count.}
    \label{fig:candidate_binsize}
\end{figure}

To evaluate \textbf{RQ3}, we first compare per-binary wall-clock time on coreutils across all five decompilers (Figure~\ref{fig:decompilertime}), then profile \decompiler{} on larger binaries to characterize its scaling behavior (Figure~\ref{fig:runningtime}). 

For most input binaries, wall-clock time grows roughly linearly with binary size, but several programs deviate sharply. \textsf{jq} (3 MB) takes roughly 460s, longer than binaries twice its size because its core operates on a recursive tagged-union type in which the same register serves as both a pointer and an integer depending on context. This ambiguity propagates through IRs, causing the Clight pass to emit multiple statement candidates per node and roughly doubling the candidate relation size. 

Peak memory (Figure~\ref{fig:runningtime}) follows a similar pattern. Most binaries stay under 15 GB, but \textsf{jq} and \textsf{tmux} reach 40-50 GB. The monotonic relation store retains all candidates at every IR level, and as Figure~\ref{fig:candidate_binsize} shows, binaries with more nodes generally produce more candidates per node, so peak RSS grows faster than binary size alone would suggest. The \textsf{tmux} outlier traces to the Clight pass, which scans the full type-candidate relation for every statement to resolve variable types, which is a cost that grows with the product of statements and candidates (Section~\ref{sub:datalog}).

Despite these costs, \decompiler{} completes all coreutils binaries in time comparable to existing decompilers, indicating that memory consumption, rather than computation time, is the primary scalability bottleneck of the approach. As part of our future work, we plan to optimize memory usage by extending \decompiler{} to use the top-$k$ proofs and various analysis-informed heuristics to preemptively cut off likely-unproductive paths at lower-level IRs.

\begin{figure}
    \centering
    \includegraphics[width=1\linewidth]{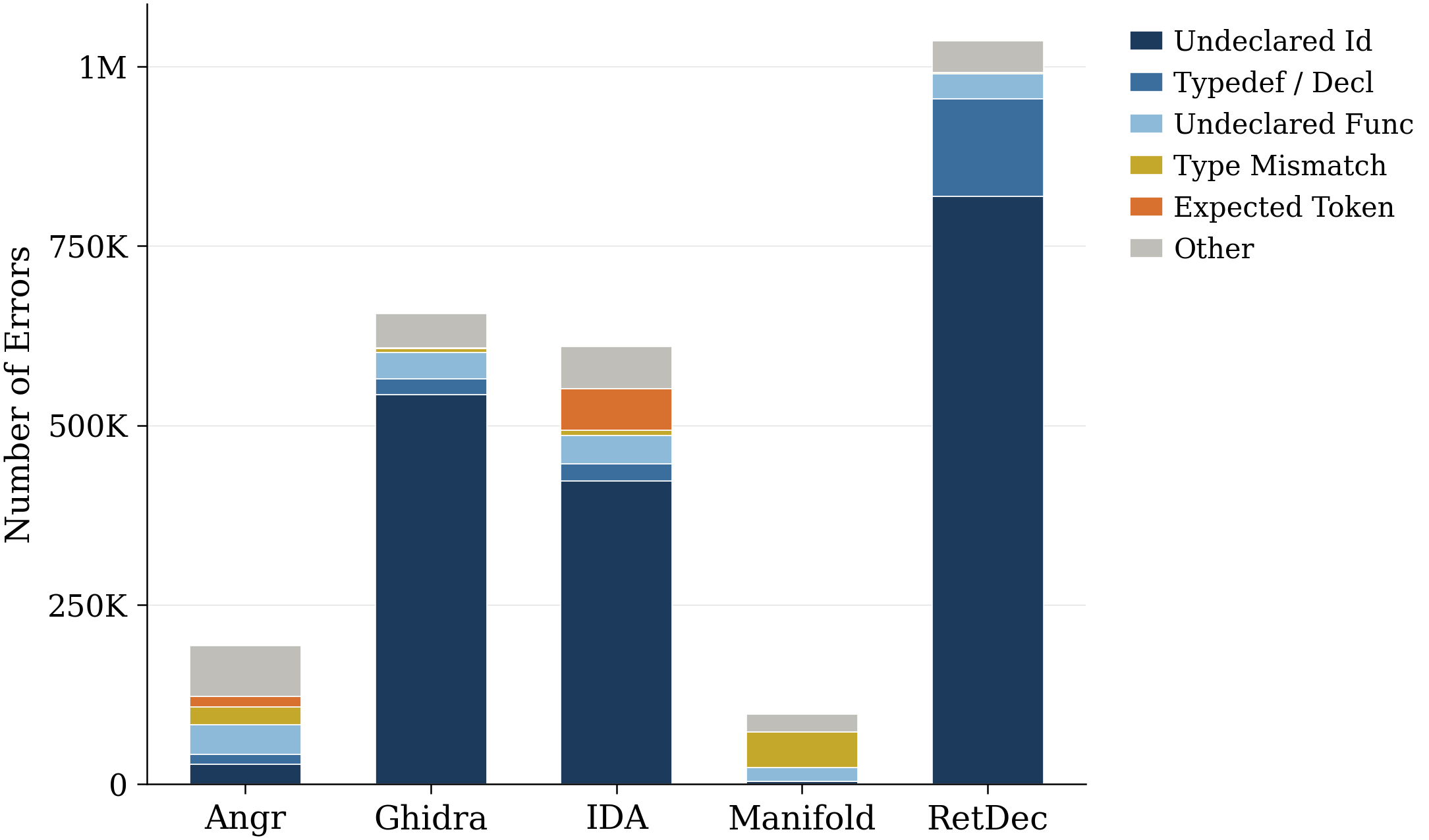}
    \caption{Error distribution from Clang validation.}
    \label{fig:clangerrs}
\end{figure}

\section{Related Work}
\label{sec:related}

\textbf{Monolithic vs. Modular Decompilation.}
Decompilation traditionally relies on monolithic architectures centered around a single, heavily mutable intermediate representation (IR). Industry standards like IDA Pro~\cite{hexrays}, alongside open-source frameworks like Ghidra~\cite{ghidra} and angr~\cite{angr,sailr}, operate over massive imperative codebases where analyses are tightly coupled, hindering extensibility and masking fidelity issues~\cite{effectiveness, dramko2024taxonomy, pcode}. Binary lifters like RetDec~\cite{retdec} and McSema~\cite{mcsema} translate binaries to LLVM IR, but as Liu et al.~\cite{liu2022sok} demonstrate, LLVM's heavyweight, compiler-centric encoding often degrades reverse-engineering precision. In contrast, modern compiler design embraces structural decomposition via nanopasses~\cite{nanopass, llvm, mlir}. \decompiler{} applies this philosophy to reverse engineering, utilizing the stratified, formally specified IRs of the CompCert verified compiler~\cite{compcert} to decouple lifting into isolated logical steps. While prior work explores formally verified decompilation into logic~\cite{myreen2012decompilation, Freek2002, EngelVerbeekRavindran23}, we focus on the architectural modularity of the lifting pipeline itself, using CompCert's IRs as a blueprint for extensibility rather than strict formal verification.

\textbf{Declarative Binary Analysis.}
Datalog has proven highly effective for complex program analysis, separating relational specification from execution strategy via engines like Souffl\'{e}~\cite{souffle-cav} and Ascent~\cite{ascent}. Frameworks like DOOP~\cite{doop} scale context-sensitive pointer analysis to millions of lines of code. In the binary domain, \textsc{ddisasm}~\cite{ddisasm} demonstrated that reassembleable disassembly can be elegantly modeled as a Datalog inference problem over superset candidates~\cite{Bauman2018SupersetDS}. Other tools applied logic programming to specific niches, such as C++ class recovery~\cite{ooanalyzer}, compositional taint analysis~\cite{ctadl}, or smart contract decompilation~\cite{decompilationsmartcontracts}. \decompiler{} generalizes this declarative approach: rather than restricting Datalog to disassembly or isolated analyses, we design the entire decompilation pipeline as a sequence of passes operating over a relation store.

\textbf{Type Recovery and Machine Learning.}
Reconstructing high-level variables and composite types is critical for binary readability. Traditional constraint-based approaches (e.g., TIE~\cite{Lee2011TIEPR}, Retypd~\cite{retypd}) and heuristic methods~\cite{Elwazeer2013ScalableVA, 9519451} typically operate as isolated, post-hoc passes over a fixed IR. More recently, neural networks and Large Language Models (LLMs) have been heavily applied to predict types and variable names~\cite{DIRTY, lacomis2019dire, 299717}, translate binaries directly to source~\cite{tan2024llm4decompile, hu2024degpt}, and drive agentic reverse-engineering workflows~\cite{wong2025decllm, zou2025dlift, chen2025recopilot,basquedecompiling}. However, these approaches are often bolted onto opaque, monolithic decompiler backends. \decompiler{} embeds type inference directly within the analysis pass pipeline, allowing type analysis to interact with data-flow and memory-layout recovery.

\section{Conclusion}

We presented declarative decompilation, a novel approach that restructures reverse engineering as a sequence of modular, logic-defined lifting passes over a shared monotonic relation store. We formalized this approach as provenance-guided superset decompilation (PGSD), which retains all candidate interpretations with explicit derivation witnesses rather than committing early to a single analysis result. We implemented our approach in \decompiler{}, a system of 35K lines of Rust and Datalog that lifts Linux ELF binaries to C through CompCert IR. Our evaluation of GNU coreutils and Assemblage binaries shows that \decompiler{} matches established decompilers in function recovery, type accuracy, and struct reconstruction, produces fewer compiler-reported errors, and generalizes across compilers and optimization levels. The modular pass architecture enables straightforward extension: adding variable-length array support, for instance, required a single analysis pass with no modifications to other passes.

Several directions for future work remain: expression folding and structured loop recovery would directly reduce the elevated statement counts and improve control-flow similarity; top-k provenance pruning and analysis-informed candidate cutoffs at lower IR levels would address the memory scalability bottleneck observed on type-ambiguous binaries. The pass architecture naturally accommodates richer targets such as other ISA support and more precise type analysis, requiring only additional passes rather than architectural changes.

\bibliographystyle{ACM-Reference-Format}
\bibliography{ref}
\end{document}